\pgfplotsset{compat=1.18}
\newcommand{\tpmod}[1]{{\@displayfalse\pmod{#1}}}
\newtheorem{theorem}{Theorem}
\newtheorem{lemma}{Lemma}
\newtheorem{definition}{Definition}
\newtheorem{proposition}{Proposition}
\newtheorem{observation}{Observation}
\theoremstyle{definition}
\newtheorem{example}{Example}
\newenvironment{proofof}[1]{\begin{proof}[\textnormal{\textbf{Proof of \Cref{#1}}}]}{\end{proof}}
\newcommand{\ovm}{\overline{m}}
\newcommand{\unm}{\underline{m}}
\DeclareMathOperator*{\argmax}{arg\,max}
\newcommand{\ol}{\gamma}
\title{
Data Sharing with a Generative AI Competitor
}
\author{
Boaz Taitler%
\thanks{Equal contribution.}
\thanks{%
    {Technion---Israel Institute of Technology (\url{boaztaitler@campus.technion.ac.il})}}
\and Omer Madmon \footnotemark[1]
\thanks{%
    {Technion---Israel Institute of Technology (\url{omermadmon@campus.technion.ac.il})}}
\and Moshe Tennenholtz%
\thanks{%
    {Technion---Israel Institute of Technology (\url{moshet@technion.ac.il})}}
\and Omer Ben{-}Porat%
\thanks{%
    {Technion---Israel Institute of Technology (\url{omerbp@technion.ac.il})}}
}
\begin{document}

\maketitle
\begin{abstract}
    As GenAI platforms grow, their dependence on content from competing providers, combined with access to alternative data sources, creates new challenges for data-sharing decisions.
    In this paper, we provide a model of data sharing between a content creation firm and a GenAI platform that can also acquire content from third-party experts. The interaction is modeled as a Stackelberg game: the firm first decides how much of its proprietary dataset to share with GenAI, and GenAI subsequently determines how much additional data to acquire from external experts. Their utilities depend on user traffic, monetary transfers, and the cost of acquiring additional data from external experts. We characterize the unique subgame perfect equilibrium of the game and uncover a surprising phenomenon: The firm may be willing to pay GenAI to share the firm's own data, leading to a costly data-sharing equilibrium. We further characterize the set of Pareto improving data prices, and show that such improvements occur only when the firm pays to share data. Finally, we study how the price can be set to optimize different design objectives, such as promoting firm data sharing, expert data acquisition, or a balance of both. Our results shed light on the economic forces shaping data-sharing partnerships in the age of GenAI, and provide guidance for platforms, regulators and policymakers seeking to design effective data exchange mechanisms.
\end{abstract}

\section{Introduction}

Generative AI (GenAI) has revolutionized our lives by delivering accurate, personalized content based on user requests within seconds. 
GenAI-driven tools can solve complex tasks that would otherwise take significantly longer, such as generating text and images based on specific user queries, enhancing productivity and streamlining creative processes across various domains.
 This capability is largely due to the vast amounts of data used in training these systems \cite{openai_instruction_following}.

 The fact that GenAI-powered platforms heavily rely on high-volume and high-quality datasets creates a market for data, enabling traditional content creation firms to share their original content with GenAI platforms. 
For instance, OpenAI has launched data partnership programs aimed at collaborating with organizations to enrich its training datasets \cite{openai_data_partnerships}. Similarly, Google has expanded its AI partner ecosystem to enhance GenAI capabilities \cite{google_cloud_ai_partners}. Another notable case is OpenAI’s partnership with Reddit, where OpenAI gains access to Reddit’s Data API to improve user engagement through its products \cite{openai_reddit_partnership}. These partnerships highlight the evolving relationship between GenAI platforms and data holders, which can range from competition to collaboration.

Our work is therefore motivated by the following concrete use case. A content creation firm generates original articles, videos, and multimedia content.
This firm can share its data with a GenAI platform that also competes with it in content distribution and audience reach. Data sharing often happens through a financial transaction, where the firm licenses its content to the GenAI platform for a fee or when the GenAI platform might pay the firm for exclusive access to its high-quality data to enhance its generative models.
Interestingly, the question of who should pay and who should receive payment in this arrangement is not always straightforward.\footnote{Due to the non-trivial question of who the paying entity is, we refer to 'data sharing' as the process of data exchange, with the paying entity clarified by context.}
On one hand, the content creation firm may demand compensation for providing valuable training data. On the other hand, On the other hand, it might even consider paying for the opportunity to share its content with the GenAI platform, recognizing that doing so could reduce the platform's need or capacity to acquire additional data from external experts, whereas allowing the platform to rely solely on external data might lead to a less favorable outcome for the firm.

In these interactions, the GenAI platform often faces a strategic choice between two methods of acquiring new data: either purchasing data from content providers (who may also be competitors), or hiring external experts to generate fresh data (or label existing data) at a cost. Both options have trade-offs: Buying data from a competitor can be inexpensive, but the competitor may be unwilling to sell, whereas generating data through experts is costly but preserves greater independence. Understanding how these two sources of data interact and how the availability of expert-generated data influences the incentives of content providers to share their proprietary datasets is central to the dynamics.

Consequently, the content creation firm also faces a strategic decision regarding data sharing. Sharing its proprietary content with the GenAI platform can bring financial benefits but may also strengthen a direct competitor. This creates a dilemma: should the firm view data sharing as an opportunity or a risk? The decision is further influenced by the GenAI platform's response, as it can substitute the firm's data with expert-generated content if necessary.

The interaction between GenAI platforms and competing content providers raises several important economic questions. What motivates a content platform or creator to share its proprietary data with a GenAI platform, especially when doing so may strengthen a competitor? Under what conditions would a content provider not only agree to share data but even be willing to pay to do so? Conversely, how should a GenAI platform structure incentives to encourage data sharing while minimizing reliance on costly expert data acquisition?

Answering these questions requires careful consideration of the incentives shaped by data pricing. A central aspect of this interaction is the price per unit of data exchanged between the firm and the GenAI platform. In practice, the price could be set by the GenAI platform, by the data provider, or influenced by external factors such as market forces or regulation. Rather than modeling the price-setting process explicitly, we treat the price as an exogenous parameter and analyze how different price levels affect equilibrium outcomes. This approach allows us to understand the strategic behavior of both parties across a range of possible pricing scenarios. We then build on this analysis to study how a designer concerned with specific objectives, such as maximizing data sharing, expert data acquisition, or overall content quality, might optimally select the price.

\paragraph{Our Contribution}
In this paper, we address these questions by modeling the interaction between a content creation firm and a GenAI competitor as a two-stage Stackelberg game, which we call the \emph{data-sharing game}. Our contributions are threefold. First, we develop a game-theoretic model where the firm, first decides the extent of data sharing, followed by the GenAI decision on acquiring additional expert data. This framework captures the strategic interplay between the two entities.
Second, we characterize the subgame perfect equilibrium (SPE) of the game, providing insights into the equilibrium strategies of both the firm and the GenAI platform under various pricing scenarios. Our characterization reveals that under mild assumptions on the prices of both shared data and expert data, two types of equilibria arise:
\begin{theorem}[informal version of Theorem \ref{thrm:eq-analysis}]
    In the data-sharing game between the firm and GenAI, the unique SPE has one of the two forms:
    \begin{itemize}[noitemsep, topsep=-4pt, leftmargin=*]
        \item The firm shares the amount of data that makes GenAI indifferent between buying all expert data and not buying expert data at all; GenAI does not buy expert data.
        \item The firm shares the amount of data that maximizes its utility under the assumption that GenAI is forced to complete its dataset by buying all expert data; GenAI buys all expert data.
    \end{itemize}
\end{theorem}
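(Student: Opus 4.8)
The plan is to solve the two-stage game by backward induction, which for a Stackelberg game delivers the SPE directly. Write $x$ for the amount of proprietary data the firm shares in the first stage, $y$ for the amount of expert data GenAI acquires in the second stage, and let $U_F$ and $U_G$ denote the two players' payoffs (each a function of $x$, $y$, the data price, and the expert price).

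\textbf{Second stage.} Fix $x$ and study $\max_y U_G(x,y)$. The first step is to determine the shape of $U_G(x,\cdot)$: GenAI's traffic revenue is a concave (or linear) function of its model quality, quality is a concave transformation of GenAI's total data $x+y$, the data transfer is constant in this subproblem, and the cost of expert data is linear in $y$. From this I expect $U_G(x,\cdot)$ to have no interior maximizer on the feasible interval, so GenAI's best response is a corner: either $y=0$ (acquire nothing) or $y=\bar y$ (acquire all available expert data). The second step compares the two corners. Define $\ol$ by the indifference condition $U_G(\ol,0)=U_G(\ol,\bar y)$ and show that $U_G(x,\bar y)-U_G(x,0)$ is strictly monotone in $x$ --- decreasing returns to total data make expert data less attractive the more the firm has already shared. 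This yields the clean threshold structure: GenAI's unique best response is $y=0$ when $x\ge\ol$ and $y=\bar y$ when $x<\ol$. The ``mild assumptions on the prices'' should be exactly what keeps $\ol$ strictly inside the feasible range of $x$, so both regimes are non-degenerate.

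\textbf{First stage.} Substituting GenAI's best response, the firm's induced payoff is piecewise: $v(x)=U_F(x,0)$ for $x\ge\ol$ and $v(x)=U_F(x,\bar y)$ for $x<\ol$. On the upper piece GenAI acquires no expert data, so extra sharing only raises the competitor's quality while shifting the transfer by a bounded amount; under the price assumptions this net effect is non-positive, so $v$ is non-increasing there and the firm's optimal action on this piece is $x=\ol$ --- share just enough to deter expert acquisition, which is the first equilibrium form. On the lower piece GenAI's downstream choice is pinned down at ``buy all expert data,'' so the firm solves the clean program $\max_x U_F(x,\bar y)$; its maximizer $x^\dagger$ (possibly a corner such as $x=0$) gives the second equilibrium form. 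The SPE is then whichever of $(x=\ol,\,y=0)$ and $(x=x^\dagger,\,y=\bar y)$ yields the firm the higher payoff, with a tie-breaking rule at the boundary.

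\textbf{Uniqueness and the main obstacle.} Uniqueness follows from (i) single-valuedness of GenAI's best response, which the strict monotonicity of $U_G(x,\bar y)-U_G(x,0)$ guarantees everywhere except at $x=\ol$, where the design assumptions let us fix the continuation $y=0$; and (ii) a unique firm optimum on each piece --- strict monotonicity above $\ol$, and strict (quasi-)concavity of $U_F(\cdot,\bar y)$ below $\ol$, argued via a first-order condition --- together with the fact that the two pieces exhaust the firm's strategy space. I expect the delicate point to be the junction at $x=\ol$, where $v$ may jump because GenAI's response switches discontinuously from $\bar y$ to $0$: I must verify that $x=\ol$ is genuinely attainable in equilibrium (at that point GenAI is indifferent, so $y=0$ is a valid best response) and that $\lim_{x\uparrow\ol} v(x)$ does not strictly exceed $v(\ol)$, since otherwise no best response exists on the upper branch and the analysis would have to be carried out on its closure. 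Identifying the precise ``mild assumptions on the prices'' is part of this step: they must simultaneously keep $\ol$ interior, fix the sign of the firm's marginal payoff on the upper branch --- this is where the counterintuitive \emph{costly} data-sharing equilibrium, with the firm as net payer, can arise --- and make $U_F(\cdot,\bar y)$ well-behaved enough to have a unique maximizer. A smaller but necessary step is checking the corner structure of GenAI's second-stage problem against the explicit utility primitives rather than merely assuming it.
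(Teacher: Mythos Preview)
Your proposal is correct and follows essentially the same backward-induction route as the paper: GenAI's second-stage utility is in fact \emph{linear} in $y$ (not merely concave), giving the corner best response with threshold $R_G$; on the upper branch the firm's payoff is linear with slope $m-r^f<0$ under the regularity condition, so the optimum is $R_G$; on the lower branch the firm's payoff is quadratic with unconstrained maximizer $R_F$, and the SPE is determined by comparing the two candidates exactly as you describe. One point to make explicit: in this model ``all available expert data'' means $\bar y = 1 - x$ (the total data cap is $1$), so $\bar y$ depends on the firm's action and the ``clean program'' on the lower branch is really $\max_x U_F(x,1-x)$---this dependence is what generates the quadratic structure and the interior maximizer $R_F$, and it also drives your junction analysis since the case $R_F\ge R_G$ (where the lower-branch supremum is approached at $x\uparrow R_G$) is handled in the paper by the inequality $U_F(R_G,1-R_G)\le U_F(R_G,0)$.
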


We further demonstrate the robustness of the equilibria and their impact on the utilities of the firm and GenAI through sensitivity analysis. Third, we use our equilibrium analysis to draw several important results regarding the economic implications of data sharing.

\begin{itemize}[noitemsep, topsep=-4pt, leftmargin=*]
        \item \textbf{Costly Data Sharing (Proposition \ref{cor:costly-ds}).} We uncover a surprising phenomenon where the firm may be willing to \emph{pay} GenAI to share the firm's own data, leading to costly data-sharing equilibria.
        \item \textbf{Pareto-Improving Data Sharing (Proposition \ref{cor:pareto}).} We identify conditions under which data-sharing agreements can be Pareto improving, benefiting both parties involved.
        \item \textbf{Optimal Data Pricing (Proposition \ref{cor:welfare}).} We analyze how the price of data influences equilibrium outcomes and characterize optimal pricing rules under different design objectives, such as promoting firm data sharing or expert data acquisition.
    \end{itemize}

Our analysis sheds light on the economic forces shaping data-sharing partnerships in the era of generative AI and provides actionable insights for platforms, content creators, and policymakers in designing effective data exchange mechanisms.

\paragraph{Related Work}

This work contributes to the growing literature on the strategic and societal aspects of foundation models and machine learning \cite{conitzerposition, laufer2024fine,shaham2025privacy,hammond2025multi}. This line of research includes studies motivated by social choice theory \cite{fish2023generative}, mechanism design \cite{duetting2024mechanism}, and welfare concerns in competitive environments, both when GenAI systems are used \cite{taitler2025braess, taitler2025selective, raghavan2024competition} and in competition against them \cite{yao2024human_p4, esmaeili2024strategize}.
Other related works examine the broader societal effects of machine learning in applications such as recommendation systems \cite{hron2022modeling, jagadeesan2206supply, yao2024rethinking, iclr_paper_6}. This includes research focused on the design of recommendation algorithms \cite{ben2018game, yao2024user_p5} as well as works inspired by information retrieval settings \cite{madmon2023search, madmon2025on}. As highlighted by \citet{rosenfeld2025machine} and \citet{dean2024accounting}, mathematical modeling plays a crucial role in helping planners and decision-makers understand the societal impacts of AI technologies and design systems that better promote social welfare.

Our work is closely related to the literature on data sharing, studying Pareto-improving sharing mechanisms both with and without monetary payments \cite{gradwohl_faact_2022,gradwohl_jair_2023,gafni2024prediction}.
Most closely related are the works of \citet{tsoy2023strategic} and \citet{gradwohl_tark_2023}, both of which model competitive environments involving data exchange. The former studies competition between learning algorithms that improve with access to more data, while the latter considers traditional firms that use data to offer personalized services. Our model can be viewed as a bridge between these two, as it involves both a traditional firm and a learning algorithm. 
More broadly, our work relates to the literature on data acquisition \cite{zhang2024survey, falconer2025selling, hossain2023equilibrium}. This includes studies on eliciting truthful and accurate information \cite{chen2020truthful, faltings2022game}, valuing data sources \cite{xu2024data,galperti2024value}, addressing privacy concerns \cite{sim2023incentives}, and leveraging data to gain advantage in a competitive setting \cite{hardt2022performative, bergemann2024data, jagadeesan2023competition, prufer2021competing} among others.
Our work also contributes to the existing literature on data sharing by introducing an outside option for the agent with whom data is shared, in the form of expert data.

\section{Model}

\paragraph{Overview} 
A content creation firm (\emph{Firm}) and a Generative AI platform (\emph{GenAI}) interact in a Stackelberg (two-stage) game. In the first stage, Firm decides what portion of its dataset to share with GenAI. In the second stage, GenAI determines how much additional data to purchase from external experts. The utilities of both parties depend on the traffic each platform attracts from users, monetary exchanges between the parties, and the cost incurred by GenAI for acquiring expert data.

\paragraph{Notations} 
Let $\alpha \in [0,1]$ denote Firm's decision (the fraction of its dataset shared with GenAI), and $x \in [0,1]$ denote GenAI's action (the amount of data it purchases from external experts, in addition to the data acquired from Firm). The total data volume accessible to GenAI is bounded, implying $x \leq 1 - \alpha$. Let $m \in \mathbb{R}$ represent the per-unit price for data exchanged between Firm and GenAI, which may be positive (Firm sells data to GenAI) or negative (Firm pays GenAI to share its data).\footnote{As we demonstrate next (Section \ref{sec:econ-implications}), by solving the game for any exogenously set $m$, one can identify market prices that satisfy desirable economic properties from different perspectives, such as maximizing the equilibrium payoff of GenAI or Firm, enhancing social welfare, or achieving Pareto improvements over the baseline case of no data sharing.} Additionally, let $c > 0$ denote the cost per data unit paid by GenAI to external experts. A traffic function $T(\alpha, x)$ determines the proportion of users that choose Firm based on the actions of both parties, where the remaining proportion, $1 - T(\alpha, x)$, corresponds to users who choose GenAI. Finally, let $r^f$ and $r^g$ represent the marginal rewards from user engagement for Firm and GenAI, respectively.

\paragraph{Timing and Utilities} 
The interaction unfolds as follows. First, Firm selects $\alpha \in [0,1]$. Next, GenAI observes $\alpha$ and chooses $x \in [0,1-\alpha]$. Firm's utility is:

\begin{equation}
  U(\alpha, x) = T(\alpha, x) r^f + m\alpha,
\end{equation}

and GenAI's utility is:

\begin{equation}
    V(\alpha, x) = \big(1 - T(\alpha, x)\big) r^g - cx - m\alpha.
\end{equation}

\paragraph{Solution Concept} 
As standard in Stackelberg games, we adopt the \emph{subgame perfect equilibrium (SPE)} as the solution concept. In this framework, the leading player (Firm) maximizes its payoff, anticipating that GenAI will respond optimally to any Firm action. This results in the following bi-level optimization problem:

\begin{equation}
    \max_{\alpha \in [0,1]} U\big(\alpha, x^\star(\alpha)\big),
\end{equation}

where $x^\star(\alpha) \coloneqq \argmax_{x \in [0,1-\alpha]} V(\alpha, x)$ represents GenAI's best response to Firm's choice, with ties resolved in favor of the minimal $x$.\footnote{This tie-breaking rule reflects GenAI’s default inclination to avoid purchasing expert data when indifferent, which may stem from a conservative stance on data acquisition.}

\paragraph{Traffic Function}
The specification of traffic functions encapsulates critical assumptions about market dynamics, including user behavior in response to available content and the strategic incentives of both Firm and the GenAI platform. As a result, the traffic function serves as a crucial analytical tool to understand how data sharing and acquisition decisions shape competition and user engagement.
While traffic functions can take various forms, our analysis focuses on the following traffic function:

\[
    T(\alpha, x) = (1 - \alpha)(1 - x).
\]

This formulation captures the intuition that users gravitate toward Firm when GenAI lacks relevant content. Given that the user is interested in a single data point, $x$ represents the probability that GenAI acquired the data point from experts, and $\alpha$ represents the probability that it was acquired from Firm. In this case, the user chooses Firm only if GenAI did not obtain the desired data point, which occurs with probability $(1 - \alpha)(1 - x)$. Notably, the multiplicative structure also reflects the decreasing marginal utility of data for GenAI: the more data Firm shares (higher $\alpha$), the smaller the incremental traffic benefit GenAI gains from acquiring additional expert data (increasing $x$).\footnote{Another underlying assumption captured by the structure of $T$ is the symmetry between the two data sources. This can be easily broken by introducing weighting factors or asymmetries in the traffic function, e.g., replacing $(1 - \alpha)(1 - x)$ with $(1 - w_1 \alpha)(1 - w_2 x)$ for weights $w_1, w_2 \in (0,1]$. However, for ease of exposition and to highlight the core strategic trade-offs, we retain the symmetric formulation in our analysis.}

While our main exposition focuses on the above traffic function, our analysis extends to a broader class of traffic functions that preserve key qualitative properties such as monotonicity and diminishing returns. In particular, the results presented in the appendix generalize to traffic functions in which the extent to which the expert's data serves as a replacement for Firm's data is controlled by an \emph{overlap parameter} $\ol$, where our traffic function boils down to a special case where the overlap is maximal, hence the two data source are substitutes. A complete discussion of this richer family of traffic functions appears in Appendix \ref{app:traffic-fns}.

\paragraph{Model Assumptions}
We now outline and justify two assumptions underlying our model, and discuss the contexts in which they are plausible. 

\begin{itemize}[noitemsep, topsep=-4pt, leftmargin=*]
    \item \textbf{GenAI buys all data offered by Firm.} We assume that when Firm offers a portion $\alpha$ of its dataset, GenAI fully accepts the offer and cannot reject or partially purchase it. This assumption reflects real-world arrangements such as \emph{output contracts}, in which a buyer agrees to purchase the entire output provided by a seller \cite{cornell_output_contract}. Such arrangements are common in data-driven industries.\footnote{For example, in the data annotation sector, such as with platforms like Outlier or DataAnnotation, firms frequently pay annotators on an hourly basis, effectively agreeing to purchase all labeled data generated during that period \cite{dataannotation_homepage,outlier_homepage}. Similarly, AI platforms may commit to ingesting all content produced by contracted partners to ensure predictable data flows.} In our setting, this assumption is particularly natural when the parties have a pre-established collaboration, where the price $m$ and GenAI’s obligation to accept any offer from Firm are contractually fixed. These arrangements streamline interactions and promote predictability.
    
    \item \textbf{Aggregated GenAI dataset volume is bounded.} We assume that GenAI is limited in the total volume of data it can hold, which includes both the data shared by Firm and the data purchased from external experts. One justification for this assumption is the practical constraint that storing and processing large volumes of data incurs significant costs, making it unfeasible for GenAI to handle unlimited amounts of data. Additionally, GenAI may face technological or regulatory limitations on data storage.\footnote{Alternatively, this assumption can be reinterpreted through the traffic function: if $\alpha + x$ exceeds the upper limit of 1, the content quality of GenAI becomes so high relative to Firm's that it captures the entire user base. In such cases, the traffic simplifies to $T \equiv 0$ for Firm and $1-T \equiv 1$ for GenAI, effectively modeling the same outcome without explicitly imposing a hard data volume constraint.}
\end{itemize}

\begin{example}
\label{example-1}
Consider the instance $r^f = r^g = 1$, $c = 0.32$ and $m = -0.1$. We now analyze three possible strategy profiles:
\begin{enumerate}[noitemsep, topsep=-4pt, leftmargin=*]
    \item \textbf{No Data Acquisition:} Firm opts not to share any data, i.e., $\alpha = 0$. Given this choice, GenAI is free to select any $x \in [0, 1]$. When GenAI does not acquire data from either Firm or the experts, it sets $x = 0$. In this scenario, all users select Firm, yielding $T(0, 0) = 1$, with corresponding utilities $U(0, 0) = r^f = 1$ for Firm and $V(0, 0) = 0$ for GenAI.

    \item \textbf{No Data Sharing:} Firm again selects $\alpha = 0$, but GenAI now chooses to purchase data from the experts. Since $c < r^g$, it is strictly beneficial for GenAI to do so. The optimal choice in this case is $x = 1$, which maximizes $V(0, x)$. As a result, all users prefer GenAI, leading to $T(0, 1) = 0$, and the resulting utilities are $U(0, 1) = 0$ for Firm and $V(0, 1) = r^g - c = 0.68$ for GenAI.

    \item \textbf{Equilibrium:} By choosing to share a positive amount of data, Firm can reduce GenAI’s ability to acquire expert data and thereby retain a fraction of the user base. In this example, the optimal strategy for Firm is to set $\alpha = 0.68$, which induces a best response of $x = 0$ from GenAI. The resulting equilibrium is $(\alpha, x) = (0.68, 0)$, yielding utilities $U(0.68, 0) = 0.252$ for Firm and $V(0.68, 0) = 0.748$ for GenAI. In Section \ref{sec:equil}, we characterize the unique equilibrium in any instance.
\end{enumerate}
\end{example}

\section{Equilibrium Characterization}
\label{sec:equil}

We now turn to the analysis of the data-sharing game, focusing on characterizing the equilibrium strategies of both players. This characterization forms the foundation for understanding their strategic behavior and deriving economic insights. In the next section, we will leverage the equilibrium structure to discuss the broader implications of data-sharing dynamics. Before stating our results, we introduce several additional key notations.

As we are interested in SPE, the game is solved using backward induction, meaning we begin by solving for GenAI's best response to any fixed Firm action $\alpha$. Since for any fixed $\alpha$ the utility of GenAI is linear in its action $x$, the best response will be either $x=0$ or $x=1-\alpha$, depending on the slope of this linear utility function. For a given game instance, we denote by $R_G$ the value of $\alpha$ for which GenAI is indifferent between playing $x=0$ and $x=1-\alpha$, and refer to it as the \emph{indifference threshold} of GenAI. A simple calculation reveals that $R_G$ can be written as follows:
\[
R_G = 1 -  \frac{c}{r^{g}}.
\]

Next, we notice that solving Firm optimization problem involves a delicate case analysis, depending on whether its action $\alpha$ falls above or below the indifference threshold of GenAI. In the case where $\alpha$ falls below $R_G$, we observe that Firm's utility is quadratic in $\alpha$, and we denote by $R_F$ its \emph{unrestricted} unique optimum (i.e., on $\mathbb{R}$ rather than $[0,1]$), which can be written as follows:
\[
R_F = \frac{1}{2} + \frac{m}{2 r^{f}}.
\]

We refer to this quota as Firm's \emph{sharing level under forced completion}, as it represents the optimal data-sharing choice that maximizes Firm’s utility under the assumption that GenAI completes its dataset with external expert data.

Our analysis primarily focuses on the case where \( 0 \leq R_G, R_F \leq 1 \). This serves as a \emph{regularity condition}, ensuring that the pricing mechanism \( m \) and the expert's cost \( c \) are within reasonable bounds, thereby preventing degenerate solutions in the equilibrium analysis. However, In Appendix~\ref{app:eq-analysis}, we present a comprehensive equilibrium characterization for a richer class of traffic functions, including cases where the regularity condition do not hold.

Assuming the regularity condition holds, Firm's optimal decision hinges on the relationship between \( R_F \) and \( R_G \). When \( R_F < R_G \), two candidate actions emerge. The first is Firm’s optimal choice within the interval \([0, R_G)\), which is \( R_F \) by definition. The second is \( R_G \), the optimal action in the interval \([R_G, 1]\), where Firm’s utility becomes linear with a negative slope due to GenAI’s best response (to refrain from purchasing expert data). Selecting the optimal action thus reduces to comparing Firm’s utilities under GenAI’s respective responses: \( U(R_G, 0) \) and \( U(R_F, 1 - R_F) \). Figure \ref{fig:rf_below_rg} illustrates Firm’s utility function (given GenAI’s best response), highlighting the two possible subcases: \( U(R_G, 0) \le U(R_F, 1 - R_F) \) (shown in \textcolor{green}{green} rectangle) and \( U(R_G, 0) > U(R_F, 1 - R_F) \) (shown in \textcolor{red}{red} triangle).

In the remaining case, where \( R_F \ge R_G \), we show that it is always optimal for Firm to choose \( R_G \). Unlike the previous case, the quadratic region of Firm's utility function cannot exceed $\alpha = R_G$, making \( R_G \) the unique optimal choice.\footnote{
This follows from the fact that $U(R_G, 1 - R_G) \le U(R_G, 0)$ and that $U(\alpha, 0)$ is decreasing in $\alpha \in [R_G, 1]$.
}
This case is depicted in Figure \ref{fig:rf_above_rg}.  
We are now ready to formally state our equilibrium characterization result:

\begin{figure}[t]
    \centering

    \begin{subfigure}{0.45\textwidth}
        \centering
        \includegraphics[width=\linewidth]{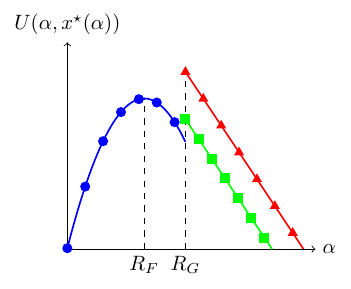}
        \caption{$R_F < R_G$}
        \label{fig:rf_below_rg}
    \end{subfigure}
    \hfill
    \begin{subfigure}{0.45\textwidth}
        \centering
        \includegraphics[width=\linewidth]{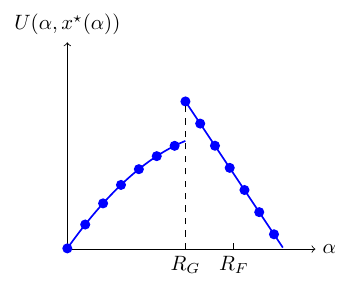}
        \caption{$R_F \ge R_G$}
        \label{fig:rf_above_rg}
    \end{subfigure}

    \caption{The utility of Firm (given the best reply of GenAI) as a function of its data sharing level $\alpha$.}
    \label{fig:equilibrium_analysis}
\end{figure}

\begin{theorem}
\label{thrm:eq-analysis}
    Assume that the regularity condition \( 0 \leq R_G, R_F \leq 1 \) holds. Then, the unique SPE\footnote{To be more precise, the SPE is $(\alpha^{eq}, x^\star)$, and $(\alpha^{eq}, x^{eq})$ is the on-path choice of the players under the SPE.} in the data-sharing game is given by:
    {\small
    \begin{align*}
    (\alpha^{eq}, x^{eq}) = \begin{cases}
    \left(R_G, 0 \right) & \mbox{$R_G > R_F$ and $U \left(R_G, 0 \right) > U \left(R_F, 1-R_F \right)$ or $R_F \geq R_G$} \\
    \left(R_F, 1-R_F \right) & \mbox{$R_G > R_F$ and $U \left(R_G, 0 \right) \leq U \left(R_F, 1-R_F \right)$}
    \end{cases}
    \end{align*}}
\end{theorem}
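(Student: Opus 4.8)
The plan is to solve the Stackelberg game by backward induction in three steps: (i) determine GenAI's best response $x^\star(\alpha)$ for every fixed $\alpha$; (ii) substitute it to obtain Firm's induced objective $\tilde U(\alpha):=U(\alpha,x^\star(\alpha))$; and (iii) maximize $\tilde U$ over $[0,1]$ and read off the on-path profile. For step (i), I would rewrite $V(\alpha,x)=\big(\alpha+x(1-\alpha)\big)r^g-cx-m\alpha=\alpha(r^g-m)+x\big((1-\alpha)r^g-c\big)$, which is affine in $x$ with slope $(1-\alpha)r^g-c$; this slope is positive precisely when $\alpha<1-c/r^g=R_G$. Hence $x^\star(\alpha)=1-\alpha$ on $[0,R_G)$, $x^\star(\alpha)=0$ on $(R_G,1]$, and at $\alpha=R_G$ the utility is flat in $x$ so the minimal-$x$ tie-break gives $x^\star(R_G)=0$.

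For step (ii), plugging $x^\star$ into $U$ yields $\tilde U(\alpha)=\alpha(1-\alpha)r^f+m\alpha=-r^f\alpha^2+(r^f+m)\alpha$ on $[0,R_G)$ --- a strictly concave parabola whose unconstrained maximizer is $R_F=\tfrac12+\tfrac{m}{2r^f}$ --- and $\tilde U(\alpha)=(1-\alpha)r^f+m\alpha=r^f+(m-r^f)\alpha$ on $[R_G,1]$, an affine function with slope $m-r^f\le 0$, the inequality being exactly the regularity condition $R_F\le 1$. The one delicate feature is that $\tilde U$ is discontinuous at $R_G$: its left limit equals $U(R_G,1-R_G)$ while its value is $U(R_G,0)$, and since $c>0$ one computes $U(R_G,0)-U(R_G,1-R_G)=(1-R_G)^2r^f>0$, so $\tilde U$ jumps strictly upward at $R_G$. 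Consequently the supremum of $\tilde U$ over the half-open branch $[0,R_G)$ is attained at $R_F$ when $R_F<R_G$ (with value $U(R_F,1-R_F)$), whereas when $R_F\ge R_G$ the parabola is increasing throughout $[0,R_G)$, its supremum is not attained, and it is strictly dominated by $U(R_G,0)$.

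For step (iii), on $[R_G,1]$ the affine branch is nonincreasing, so its maximizer is $\alpha=R_G$ with value $U(R_G,0)$. Combining the two branches: if $R_F\ge R_G$, then $R_G$ is the global maximizer of $\tilde U$; if $R_F<R_G$, the global maximum is $\max\{U(R_F,1-R_F),\,U(R_G,0)\}$, attained at $R_F$ iff $U(R_F,1-R_F)\ge U(R_G,0)$ and at $R_G$ otherwise --- exactly the stated case split. I would then translate the optimal $\alpha$ back to an on-path profile via step (i): $\alpha^{eq}=R_G$ induces $x^{eq}=x^\star(R_G)=0$, and $\alpha^{eq}=R_F$ --- which arises only in the regime $R_F<R_G$, where the slope of $V$ in $x$ at $\alpha=R_F$ is strictly positive --- induces $x^{eq}=x^\star(R_F)=1-R_F$. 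Uniqueness of the SPE follows from strict concavity of the parabola and (generically) strict monotonicity of the affine branch; on the non-generic sets $m=r^f$ or $U(R_F,1-R_F)=U(R_G,0)$ Firm's optimal payoff is still unique and one selects the optimizer dictated by the stated conventions.

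I expect the main obstacle to be precisely the handling of this discontinuity together with the half-open domain $[0,R_G)$ of the quadratic branch: when $R_F\ge R_G$ one cannot simply plug the parabola's vertex into the optimization, and must instead argue that the parabola's supremum on $[0,R_G)$ is unattained and strictly beaten by the affine value $U(R_G,0)$ --- this is the role of the inequality $U(R_G,1-R_G)\le U(R_G,0)$ highlighted in the footnote. Everything else is bookkeeping once the two tie-breaking rules (minimal $x$ for GenAI in step (i), and the convention in the case split for Firm on the measure-zero tie) are fixed.
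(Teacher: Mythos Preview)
Your proposal is correct and follows essentially the same approach as the paper: backward induction yielding GenAI's threshold best response at $R_G$, then a piecewise analysis of Firm's induced objective (concave quadratic on $[0,R_G)$, affine with slope $m-r^f\le 0$ on $[R_G,1]$), with the key observation that $U(R_G,0)>U(R_G,1-R_G)$ handles the half-open boundary when $R_F\ge R_G$. The paper's appendix carries out exactly this argument (for a more general overlap parameter $\gamma$ and without assuming regularity, hence with more case splits), but restricted to the theorem's hypotheses your three-step outline matches it step for step.
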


The proof of Theorem \ref{thrm:eq-analysis} follows immediately from the complete equilibrium analysis provided in Appendix \ref{app:eq-analysis}, which also covers the cases in which the regularity condition does not hold.

\section{Economic Implications of Data Sharing}
\label{sec:econ-implications}

In this section, we utilize the equilibrium characterization of the game to derive several economic insights on data-sharing markets (with proofs being deferred to Appendix \ref{app:omitted-proofs}).

\paragraph{Costly Data Sharing} Our equilibrium analysis shows that Firm might be willing to \emph{pay} to share its data with GenAI, leading to a \emph{costly data-sharing equilibrium}. In particular, it enables us to provide sufficient conditions on the price $m$ and the experts' cost $c$ for such a phenomenon to occur:

\begin{proposition}
\label{cor:costly-ds}
If $c \in \left( 0,r^g \right)$ and $m \in (-r^f, 0)$, the unique SPE $(\alpha^{eq}, x^{eq})$ satisfies $\alpha^{eq} > 0$.
\end{proposition}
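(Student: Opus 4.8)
The plan is to obtain the result as an immediate corollary of the equilibrium characterization in Theorem~\ref{thrm:eq-analysis}. The only work is to check that the hypotheses $c \in (0, r^g)$ and $m \in (-r^f, 0)$ place the instance inside the regularity regime, so that the theorem applies, and that under these hypotheses both candidate equilibrium sharing levels, $R_G$ and $R_F$, are strictly positive.

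First I would verify the regularity condition $0 \le R_G, R_F \le 1$. From $c \in (0, r^g)$ we get $c / r^g \in (0,1)$, hence $R_G = 1 - c/r^g \in (0,1)$, so in particular $R_G > 0$ and $0 \le R_G \le 1$. From $m \in (-r^f, 0)$ we get $m/(2 r^f) \in (-\tfrac12, 0)$, hence $R_F = \tfrac12 + \tfrac{m}{2 r^f} \in (0, \tfrac12)$, so in particular $R_F > 0$ and $0 \le R_F \le 1$. Thus Theorem~\ref{thrm:eq-analysis} is applicable.

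Second I would read off the conclusion. Theorem~\ref{thrm:eq-analysis} states that the unique SPE has $\alpha^{eq} = R_G$ (when $R_G > R_F$ and $U(R_G,0) > U(R_F, 1-R_F)$, or when $R_F \ge R_G$) or $\alpha^{eq} = R_F$ (in the remaining subcase). In either case $\alpha^{eq} \in \{R_G, R_F\}$, and by the first step both of these values are strictly positive, so $\alpha^{eq} > 0$, as claimed.

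I do not expect a genuine obstacle here: the proposition is a direct consequence of the characterization, and the hypotheses on $c$ and $m$ are precisely tailored to keep the instance regular while bounding $R_G$ and $R_F$ away from $0$. If one prefers an argument that does not invoke the full characterization, one can instead note that at $\alpha = 0$ GenAI's best response is $x^\star(0) = 1$ (since $0 < R_G$), giving Firm utility $U(0,1) = 0$; deviating to a small $\alpha > 0$ with $\alpha < R_G$ keeps GenAI's best response at $x^\star(\alpha) = 1 - \alpha$ and yields $U(\alpha, 1-\alpha) = \alpha (r^f + m) - \alpha^2 r^f$, whose derivative at $\alpha = 0$ equals $r^f + m > 0$ because $m > -r^f$; hence $\alpha = 0$ is not a best response for Firm and $\alpha^{eq} > 0$.
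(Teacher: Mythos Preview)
Your proof is correct and follows essentially the same approach as the paper: verify that the hypotheses force the regularity condition to hold with $R_F, R_G > 0$, then invoke Theorem~\ref{thrm:eq-analysis} to conclude that $\alpha^{eq} \in \{R_F, R_G\}$ is strictly positive. The paper states the argument for the general overlap parameter $\gamma \in (0,1]$, but the structure is identical; your supplementary direct argument (showing $\alpha = 0$ is not a best response for Firm) is a nice addition not present in the paper.
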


The existence of instances that admit a costly data-sharing equilibrium hints that firms may perceive data sharing not merely as an expense but as a strategic investment. 
Recall from Example~\ref{example-1} that Firm can always guarantee a utility of at least zero by choosing not to share any data. Therefore, whenever Firm pays GenAI to share its data, the decision must be utility-improving and can thus be interpreted as a form of investment.

\paragraph{Pareto-Improving Data Sharing} 
An important question is whether data-sharing agreements can be structured to benefit both parties relative to the baseline of no sharing. In particular, we seek to identify conditions under which a suitable choice of the data price $m$ leads to a \emph{Pareto improvement}: both Firm and GenAI achieve (weakly) higher payoffs than if Firm refrains from sharing its data. Pareto-improving data sharing is economically significant, as it ensures voluntary participation from both sides and supports the stability of data-sharing arrangements without external enforcement.

\begin{definition}
    A data price $m$ is said to be \textbf{Pareto improving} if $U(\alpha^{eq}, x^{eq}) \ge U(0, x^\star(0))$ and $V(\alpha^{eq}, x^{eq}) \ge V(0, x^\star(0))$.
\end{definition}

The following proposition characterizes the existence and structure of Pareto-improving prices:
    
\begin{proposition} \label{cor:pareto}
    Assume that the regularity condition holds. There exists $M = [m_1, m_2] \subset \mathbb{R}_{\leq 0}$ such that:
    \begin{enumerate}[noitemsep, topsep=-4pt, leftmargin=*]
        \item If $R_G > 0$ and $r^g \geq 2r^f$ then $m$ is Pareto improving if and only if $m \in M$.
        \item If $R_G > 0$ and $r^g < 2r^f$, there exist $M' = [m'_1, m'_2] \subset \mathbb{R}_{\leq 0}$ such that $M \cap M' = \emptyset$ and $m$ is Pareto improving if and only if $m \in M \cup M'$.
        \item If $R_G = 0$, there are no Pareto improving prices $m$.
    \end{enumerate}
\end{proposition}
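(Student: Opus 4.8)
The plan is to run backward induction through Theorem~\ref{thrm:eq-analysis}: first compute the no-sharing benchmark $(0,x^\star(0))$, then evaluate the two possible equilibrium outcomes $(R_G,0)$ and $(R_F,1-R_F)$ against it, and finally assemble the Pareto-improving prices from the contributions of the two equilibrium regimes. Since $V(0,x)=x(r^g-c)$ has slope $r^g-c\ge 0$ under regularity, the benchmark is $(0,1)$ with payoffs $(U_0,V_0)=(0,\,r^g-c)$ when $R_G>0$, and $(0,0)$ with payoffs $(r^f,0)$ when $R_G=0$. Part~(3) is then immediate: $R_G=0$ forces $R_F\ge 0=R_G$, so Theorem~\ref{thrm:eq-analysis} yields $(\alpha^{eq},x^{eq})=(R_G,0)=(0,0)$, which is exactly the benchmark, so both inequalities in the definition can only hold with equality and no price produces a (strict) improvement.

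For $R_G>0$ I evaluate the payoffs at the two candidate equilibria: $U(R_G,0)=\tfrac{cr^f+m(r^g-c)}{r^g}$ and $V(R_G,0)=\tfrac{(r^g-c)(r^g-m)}{r^g}$, while $U(R_F,1-R_F)=\tfrac{(r^f+m)^2}{4r^f}$ and $V(R_F,1-R_F)-(r^g-c)=\tfrac{1}{4(r^f)^2}\,g(m)$ with $g(m)=(r^g-2r^f)(m+r^f)(m-\mu)$, $\mu=\tfrac{r^f(r^g-2c)}{r^g-2r^f}$. Three structural facts make this tractable: (i) $U(R_F,1-R_F)\ge 0$ always, and $U(R_G,0)\ge 0$ throughout the region where $(R_G,0)$ is the equilibrium, so in both regimes the constraint $U^{eq}\ge U_0$ is free; (ii) $m=-r^f$ (equivalently $R_F=0$) is always a root of $g$, because it reproduces the benchmark, so the sign pattern of $g$ on the regularity window $[-r^f,r^f]$ is governed solely by $\operatorname{sign}(r^g-2r^f)$ and the location of the second root $\mu$; and (iii) $V(R_G,0)\ge V_0$ is just the linear inequality $m\le 0$. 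Hence the Pareto-improving prices in the $(R_G,0)$ regime are exactly those $m\le 0$ in that regime, and in the $(R_F,1-R_F)$ regime exactly those $m$ with $g(m)\ge 0$.

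Next I locate the regime boundary: solving $U(R_F,1-R_F)=U(R_G,0)$ gives roots $m=r^f$ and $m=m^\ast:=r^f\!\left(1-\tfrac{4c}{r^g}\right)$, and with $R_G>R_F\iff m<r^f(1-\tfrac{2c}{r^g})$ this shows the equilibrium is $(R_F,1-R_F)$ exactly for $m\le m^\ast$ and $(R_G,0)$ otherwise, with $m^\ast\ge -r^f$ (so the former regime meets regularity) iff $R_G\ge\tfrac12$. The Pareto set is therefore the union of $(m^\ast,0]$ — or $[-r^f,0]$ when $m^\ast<-r^f$ — from the $(R_G,0)$ regime (here one also uses $m^\ast\ge -\tfrac{cr^f}{r^g-c}$, which is just $(r^g-2c)^2\ge 0$, so $U(R_G,0)\ge 0$ holds automatically on that regime) with $[-r^f,m^\ast]\cap\{g\ge 0\}$ from the $(R_F,1-R_F)$ regime. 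If $r^g\ge 2r^f$ the leading coefficient of $g$ is nonnegative, so $\{g\ge 0\}\cap[-r^f,r^f]$ is $[-r^f,r^f]$, $\{-r^f\}$, or $\{-r^f\}\cup[\mu,r^f]$; using that $r^g\ge 2r^f$ with $R_G\ge\tfrac12$ forces $c\le r^g-r^f$, and that $R_G>\tfrac34$ forces $r^g>4c$ (hence $\mu>m^\ast$), each admissible configuration makes the second regime either fill $[-r^f,m^\ast]$ and glue with $(m^\ast,0]$ into a single interval, or contribute only the degenerate (non-strict) point $\{-r^f\}$, so the Pareto set is a single interval $M\subseteq\mathbb R_{\le 0}$. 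If $r^g<2r^f$ the leading coefficient is negative, $\{g\ge 0\}$ is the closed interval with endpoints $-r^f$ and $\mu$, and intersecting it with $[-r^f,m^\ast]$ and adjoining $(m^\ast,0]$ yields one interval when $\mu\ge m^\ast$ and two disjoint intervals $M\cup M'$ when $\mu<m^\ast$ (equivalently $r^g>\tfrac{4cr^f}{c+r^f}$), the gap between them being the sub-interval $(\mu,m^\ast]$ of the $(R_F,1-R_F)$ regime on which $g<0$; in every sub-case the restriction of $\{g\ge 0\}$ to the $(R_F,1-R_F)$ regime lies in $\mathbb R_{\le 0}$ (automatic when $m^\ast\le 0$, and when $m^\ast>0$ one checks $g(m)<0$ for $m\ge 0$), so $M,M'\subseteq\mathbb R_{\le 0}$.

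The main obstacle is the bookkeeping in the last step: one must order the thresholds $-r^f$, $\mu$, $m^\ast$, $-\tfrac{cr^f}{r^g-c}$, $0$ across the sub-regions of parameter space and verify the two regime contributions glue into exactly the claimed interval(s), and one must settle the endpoints $m=m^\ast$ (where Theorem~\ref{thrm:eq-analysis} selects the $(R_F,1-R_F)$ outcome) and $m=-r^f$ (where the equilibrium degenerates to the benchmark). These endpoints only affect whether an interval is open or closed at one point, not the qualitative dichotomy of one interval versus two disjoint intervals, which is driven entirely by the sign of $r^g-2r^f$ through the leading coefficient of $g$.
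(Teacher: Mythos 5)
Your proposal is correct and follows essentially the same route as the paper: it compares the two equilibrium regimes against the no-sharing benchmark $(0,1)$, identifies the regime boundary $m^\ast=r^f(4R_G-3)$ (the paper's Lemma~\ref{lemma: equalibrium RF conditions}), and splits on the sign of $r^g-2r^f$ to decide whether the two regimes' contributions glue into one interval or remain disjoint; your factored form $g(m)=(r^g-2r^f)(m+r^f)(m-\mu)$ is just a cleaner packaging of the paper's inequalities for GenAI's participation constraint. The only point worth flagging is shared with the paper: at $m=-r^f$ (and throughout the case $R_G=0$) the equilibrium coincides with the benchmark, so under the stated weak-inequality definition these prices are vacuously ``Pareto improving,'' and your explicit appeal to strict improvement there is, if anything, a more careful reading than the paper's own.
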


\begin{wrapfigure}{r}{0.5\textwidth}
    \centering
    \includegraphics{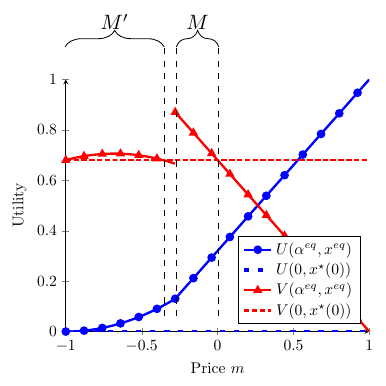}
    \caption{An illustration of the set of Pareto improving prices corresponding to Example \ref{example-1}. Notice that this example falls into the second case of Proposition \ref{cor:pareto}, as the set of prices splits into two disjoint (non-positive) intervals. }
    \label{fig: pareto}
\end{wrapfigure}

Figure \ref{fig: pareto} illustrates the set of Pareto improving prices in Example \ref{example-1}.
Interestingly, Pareto improvements occur only under \emph{negative mechanisms}, meaning that Firm \textbf{pays} GenAI to share its data. Intuitively, GenAI buys all the data from experts when $r^g > c$, and Firm refrains from sharing data. In this case, $U(0, 1) = 1$ and $V(0, 1) = r^g - c$. Therefore, acquiring data from Firm is detrimental to GenAI's user base, and as a result, accepting $\alpha > 0$ is profitable only when $m \leq 0$. We further extend this proposition in Appendix~\ref{app:omitted-proofs} to a broader class of traffic functions (introduced in Appendix~\ref{app:traffic-fns}), showing that \emph{positive mechanisms}, in which Firm \textbf{receives} payment for its data, can also lead to Pareto-improving outcomes.

\paragraph{Optimal Data Pricing} 

Thus far, we have treated the per-unit data price $m$ as an exogenous parameter. However, in practice, it is natural to ask who sets the price and according to what objective. For example, if the GenAI platform sets the price, it might aim to maximize Firm's willingness to share data, as having access to such a dataset can benefit the long-term interests of GenAI, potentially even beyond the scope of our data-sharing game.

In contrast, a regulator seeking to promote overall content creation and data availability may prefer a balanced outcome, encouraging both Firm data sharing and GenAI's expert data acquisition. 
Other scenarios are also conceivable: for instance, an external policymaker concerned primarily with the quality of GenAI's content might prioritize expert data acquisition over Firm data sharing.\footnote{This perspective can also be framed as a social welfare optimization problem from the users' standpoint. For instance, if expert data is of higher quality than Firm data, user welfare could be modeled as a weighted sum of data quantities acquired by GenAI from each source, where the weights reflect relative data quality.}

To capture these different objectives, we consider linear combinations of Firm's sharing level $\alpha$ and GenAI's expert data acquisition $x$. Specifically, we introduce a parameter $\lambda \geq 0$ that governs the relative weight placed on expert data acquisition, and study the following optimization problem:

\begin{align}
\label{eq: welfare optimization}
    & \max_{m \in [-r^f, r^f]} \alpha + \lambda x \tag{$P_{\lambda}$} \nonumber \\
    & \text{s.t. $(\alpha, x)$ is the SPE outcome in the game with price $m$} \nonumber
\end{align}

Setting $\lambda = 0$ corresponds to valuing only Firm's data sharing, while larger values of $\lambda$ place increasing emphasis on GenAI's expert data acquisition. In particular, letting $\lambda \to \infty$ captures the case where the price setter is primarily concerned with maximizing expert data purchases. The case of $\lambda = 1$ corresponds to valuing Firm's data sharing and GenAI's expert data acquisition equally.
The following proposition characterizes the optimal data prices for the objective defined in~\eqref{eq: welfare optimization}:

\begin{proposition} \label{cor:welfare}
    Assume that the regularity condition holds. 
    If $r^g < 2c$, then any $m \in [-r^f, r^f]$ is an optimal solution for ~\eqref{eq: welfare optimization}. Otherwise, any solution of ~\eqref{eq: welfare optimization} is of the form
    {\small
    \begin{align*}
        m = \begin{cases}
        -r^f & \mbox{$\lambda \geq 1$} \\
        m^b & \mbox{$\lambda \in (0.5,1)$} \\
        m' & \mbox{$\lambda = 0.5$} \\
        m'' & \mbox{$\lambda < 0.5$} \\
    \end{cases}
    \end{align*}}%

    where $m^b = r^f\left(4R_G - 3\right)$ and $m', m''$ can be any prices in $[m^b, r^f]$ such that $m'' > m^b$.
\end{proposition}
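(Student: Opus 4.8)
The plan is to turn the bilevel problem $(P_\lambda)$ into an explicit scalar maximization over $m$ by reading off, from Theorem~\ref{thrm:eq-analysis}, which of the two SPE forms occurs at each price. The starting point is that $R_F = \tfrac12 + \tfrac{m}{2r^f}$ is affine and strictly increasing in $m$, with $R_F(-r^f)=0$ and $R_F(r^f)=1$, so as $m$ ranges over the feasible set $[-r^f,r^f]$ the value $R_F$ ranges over $[0,1]$, whereas $R_G = 1 - c/r^g$ does not depend on $m$. Thus the only feature of the equilibrium map that $m$ controls is where $R_F$ sits relative to $R_G$ and relative to the utility comparison in Theorem~\ref{thrm:eq-analysis}.

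The second step is the key algebraic reduction: the comparison of $U(R_G,0)$ and $U(R_F,1-R_F)$ collapses to comparing $R_F$ with $2R_G-1$. Using $T(\alpha,x)=(1-\alpha)(1-x)$ one gets $U(R_G,0) = r^f(1-R_G) + mR_G$ and, after substituting $m = r^f(2R_F-1)$, $U(R_F,1-R_F) = r^f R_F^2$; subtracting and factoring yields
\[
U(R_G,0) - U(R_F,1-R_F) = r^f\,(1-R_F)\,\bigl(1 + R_F - 2R_G\bigr).
\]
Since $r^f>0$ and $1-R_F>0$ on the relevant range, the sign is that of $1+R_F-2R_G$. Combined with the case split of Theorem~\ref{thrm:eq-analysis}, the SPE outcome is $(\alpha,x)=(R_F,1-R_F)$ exactly when $R_F \le 2R_G-1$, i.e.\ when $m \le m^b := r^f(4R_G-3)$, and it is $(\alpha,x)=(R_G,0)$ otherwise; one verifies $m^b\in[-r^f,r^f]$ from $R_G\in[\tfrac12,1]$ when $r^g\ge 2c$, and notes that $R_G<1$ (because $c>0$) keeps the two regimes genuinely distinct. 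This also disposes of the first claim: when $r^g<2c$ we have $R_G<\tfrac12$, so $2R_G-1<0\le R_F$ for every feasible $m$, the SPE is always $(R_G,0)$, the objective is the constant $R_G$, and every $m\in[-r^f,r^f]$ is optimal.

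For $r^g \ge 2c$ the objective $\alpha+\lambda x$, viewed as a function of $m$, becomes
\[
\begin{cases}
\lambda + (1-\lambda)\,R_F(m), & m\in[-r^f,m^b],\\[2pt]
R_G, & m\in(m^b,r^f],
\end{cases}
\]
affine on the left piece (slope with the sign of $1-\lambda$) and constant on the right. The right piece has maximum $R_G$. On the left piece: if $\lambda>1$ it decreases, so its maximum is $\lambda$, attained at $m=-r^f$; if $\lambda\le 1$ it is nondecreasing, so its maximum is $\lambda+(1-\lambda)(2R_G-1)$, attained at $m=m^b$. The cross-piece comparison rests on the two identities $\lambda+(1-\lambda)(2R_G-1) - R_G = (1-R_G)(2\lambda-1)$ and $\lambda \ge 1 > R_G$. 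A short case analysis then finishes the proof: for $\lambda\ge 1$, the point $m=-r^f$ attains the global maximum; for $\lambda\in(0.5,1)$, $(1-R_G)(2\lambda-1)>0$ forces $m=m^b$ to be the unique maximizer; for $\lambda=0.5$ the two pieces tie at value $R_G$, so every $m\in[m^b,r^f]$ is optimal; and for $\lambda<0.5$ the right piece strictly dominates, so the maximizers are precisely $m\in(m^b,r^f]$.

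The main obstacle is the reduction in the second paragraph: the downstream argument is a routine piecewise-affine optimization, but it only becomes transparent once the utility comparison is replaced by the single inequality comparing $R_F$ with $2R_G-1$ through the above factorization. Some additional care is needed to verify that the breakpoint $m^b$ lands inside the feasible interval and that the strict inequality $R_G<1$ (from $c>0$) prevents the two equilibrium regimes --- and hence the two branches of the objective --- from collapsing into one.
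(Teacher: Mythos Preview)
Your proof is correct and follows the same overall two-step structure as the paper: first pin down the price threshold $m^b$ separating the two equilibrium regimes, then carry out a piecewise-affine maximization in $m$ with the cross-piece comparison reducing to the identity $(1-R_G)(2\lambda-1)$.

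The one genuine difference is how the threshold is obtained. The paper (in its Lemma characterizing when $(R_F,1-R_F)$ is the equilibrium) writes the inequality $U(R_G,0)\le U(R_F,1-R_F)$ as a quadratic in $m$, applies the quadratic formula to get roots $m_\pm$, and then argues that only $m\le m_-=r^f(4R_G-3)$ is relevant. You instead substitute $m=r^f(2R_F-1)$ and factor directly:
\[
U(R_G,0)-U(R_F,1-R_F)=r^f\,(1-R_F)\,(1+R_F-2R_G),
\]
which immediately yields the equivalent condition $R_F\le 2R_G-1$ and, as you note, automatically subsumes the constraint $R_G>R_F$ via $R_G<1$. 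Your route is cleaner for the $\gamma=1$ setting of the proposition; the paper's quadratic computation is clunkier here but is set up to handle the general overlap parameter $\gamma\in(0,1]$ treated in the appendix, where the factorization is not available. The downstream case analysis over $\lambda$ is essentially identical in both proofs.
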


The price $m^b$ plays a key role in \Cref{cor:welfare}, as this value, according to \Cref{thrm:eq-analysis}, marks the boundary between the two equilibria. Specifically, the profile $(R_F, 1 - R_F)$ is the equilibrium whenever $m \in [-r^f, m^b]$, while the profile $(R_G, 0)$ is the equilibrium whenever $m \in (m^b, r^f]$. Furthermore, if $r^g < 2c$ then $m^b < -r^f$ and the only equilibrium is $(R_G, 0)$, which is not affected by $m$. Otherwise, the pricing $m$ controls which equilibrium GenAI and Firm would end up in. Larger values of $\lambda$ are associated with the equilibrium $(R_F, 1-R_F)$ as low prices reduces $\alpha = R_F$. The prices $m = -r^f$ and $m = r^f(4R_G - 3)$ induces the minimal and maximal values of $R_F$ in the equilibrium profile $(R_F, 1-R_F)$. Observe that $\alpha \leq R_G$ and therefore when $\lambda$ is small, $\alpha$ is maximized and the optimal profile which maximizes \Cref{thrm:eq-analysis} shifts to $(R_G, 0)$, which is the equilibrium when $m \in \left(r^f \left(4R_G - 3 \right), r^f \right]$. Lastly, $\lambda = 0.5$ is the tipping point where both equilibria can maximize our objective.

\section{Sensitivity Analysis}

\newlength{\subfigwidth}
\setlength{\subfigwidth}{0.5\textwidth}

In this section, we illustrate the utilities of Firm and GenAI across different instances. To this end, we conduct a sensitivity analysis to examine the boundaries and robustness of each equilibrium with respect to the parameters $r^f, r^g, c, m$.

Our first analysis explores varying values of $r^f$ and $r^g$, while keeping the pricing fixed at $c = m = 1$, as shown in Figure~\ref{fig_sim_rf_rg}. Light colors in Figure~\ref{fig3a} and Figure~\ref{fig3b} indicate high utility values, while darker colors indicate lower utilities. Figure~\ref{fig3c} illustrates the equilibrium for each instance: yellow represents the profile $(R_G, 0)$, and purple represents the profile $(R_F, 1 - R_F)$. More elaborately, in Figure~\ref{fig3a}, higher values of $r^f$ make it more profitable for Firm to generate revenue from user traffic. In the $(R_G, 0)$ equilibrium, GenAI does not purchase data from experts, so all its data originates from Firm. Observe that as $r^g$ decreases, the equilibrium value $R_G$ also decreases, leading Firm to sell less data and retain a larger user base.
As $r^g$ increases, the equilibrium shifts to $(R_F, 1 - R_F)$, which is independent of $r^g$ and therefore does not affect Firm's utility.

In Figure~\ref{fig3b}, we observe that GenAI's utility remains constant with respect to $r^f$ under the $(R_G, 0)$ equilibrium. As $r^g$ increases, GenAI generates higher revenue from user traffic. We highlight that in both Figure~\ref{fig3a} and Figure~\ref{fig3b}, the boundary between the two equilibria induces a discontinuity in the utilities of both Firm and GenAI, as anticipated by Figure~\ref{fig:equilibrium_analysis}.

\begin{figure}
    \raisebox{2.6cm}{
            $r^g$
        }
    \centering
    \begin{subfigure}{0.25\textwidth}
    \centering
        \includegraphics[width=\textwidth, trim=20 0 0 0, clip]{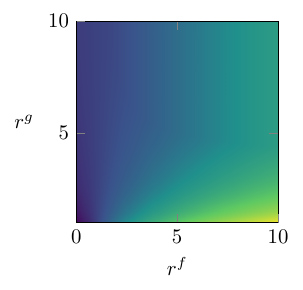}
    \caption{Firm's utility} \label{fig3a}
    \end{subfigure}
    \hfill
    \begin{subfigure}{0.25\textwidth}
    \centering
        \includegraphics[width=\textwidth, trim=20 0 0 0, clip]{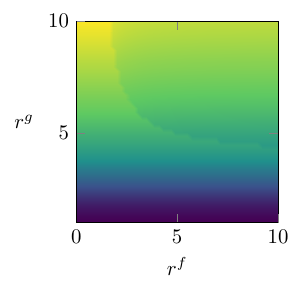}
    \caption{GenAI's utility} \label{fig3b}
    \end{subfigure}
    \hfill
    \begin{subfigure}{0.25\textwidth}
    \centering
        \includegraphics[width=\textwidth, trim=20 0 0 0, clip]{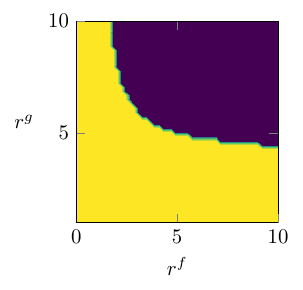}
    \caption{Equilibrium} \label{fig3c}
    \end{subfigure}

    \vspace{1em}

    \raisebox{2.6cm}{
            $m$
        }
    \centering
    \begin{subfigure}{0.25\textwidth}
    \centering
        \includegraphics[width=\textwidth, trim=20 0 0 0, clip]{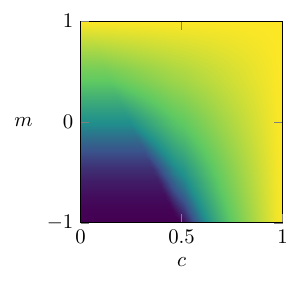}
    \caption{Firm's utility} \label{fig4a}
    \end{subfigure}
    \hfill
    \begin{subfigure}{0.25\textwidth}
    \centering
        \includegraphics[width=\textwidth, trim=20 0 0 0, clip]{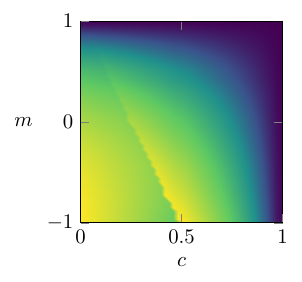}
    \caption{GenAI's utility} \label{fig4b}
    \end{subfigure}
    \hfill
    \begin{subfigure}{0.25\textwidth}
    \centering
        \includegraphics[width=\textwidth, trim=20 0 0 0, clip]{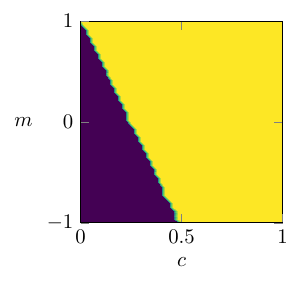}
    \caption{Equilibrium} \label{fig4c}
    \end{subfigure}
    
\caption{Sensitivity analysis for $r^f, r^g, c, m$. The top row (figures a–c) varies $r^f$ and $r^g$: figures a and b show the utilities of the Firm and GenAI, respectively, while figure c presents the induced equilibrium for each parameter combination. The bottom row (figures d–f) varies $c$ and $m$: figures d and e describe the corresponding utilities, and figure f shows the resulting equilibrium.}
\label{fig_sim_rf_rg}
\end{figure}

Our next analysis focuses on varying $c$ and $m$, while fixing $r^f = r^g = 1$. We observe a similar interplay between the instance parameters and the utilities in each equilibrium profile. Recall from \Cref{cor:welfare} that the boundary between the two equilibria is given by $m = r^f(4R_G - 3) = r^f\left(1 - 4 \nicefrac{c}{r^g}\right)$, which scales linearly with $c$ and aligns with Figure~\ref{fig4c}. Since Firm is a utility maximizer, the lowest utility in Figure~\ref{fig4a} (indicated by dark colors) is bounded below by 0. Notably, we observe the costly data sharing phenomena for which Firm's utility increases when $m$ is negative, while in the equilibrium profile $(R_F, 1-R_F)$.

\section{Conclusions and Future Work}

In this work, we analyzed the economic dynamics underlying data-sharing agreements between content creation firms and GenAI platforms. Using a game-theoretic model, we characterized the subgame perfect equilibrium of a two-stage game where Firm first decides how much of its proprietary dataset to share, and the GenAI platform subsequently chooses how much additional data to acquire from external experts. Our analysis uncovered several novel insights, including the emergence of costly data-sharing equilibria in which firms are willing to pay GenAI platforms to share their data, and the identification of conditions under which data-sharing agreements can be Pareto improving. We also studied how different objectives can be optimized through data pricing, providing guidelines for platforms and policymakers aiming to promote efficient and mutually beneficial data exchanges.

Several future directions emerge naturally from our model. First, while we focused on an exogenously set price for data sharing, an important extension would be to endogenize the price-setting process through bargaining or mechanism design. Second, our model considers a single firm and a single GenAI platform; extending the framework to competitive environments involving multiple firms or multiple GenAI platforms could reveal richer strategic behavior and market-level effects. Third, our baseline traffic function assumes symmetric substitutability between firm and expert data; exploring alternative functional forms that capture asymmetric quality or reputation differences could provide further practical insights. Finally, introducing incomplete information about the traffic function or about the other player's utility function presents a promising direction for future research. Such models capture scenarios where firms and GenAI platforms have only partial knowledge of user behavior or their competitor’s incentives, and must make decisions under uncertainty.

\bibliographystyle{plainnat}
\bibliography{main}

\appendix

\section{Additional Traffic Functions}
\label{app:traffic-fns}

In this section, we introduce and discuss a family of traffic functions that extends and generalizes the one discussed in the main paper. Crucially, our main result in Appendix \ref{app:eq-analysis} (the SPE characterization) is derived with respect to this general family. As a result, the economic insights analogous to those discussed in the main paper can be easily derived using the same analytical framework. Our family of traffic functions includes traffic functions of the form

\[
    T(\alpha, x) = 1 - \alpha - x + \ol\alpha x,
\]

where \( \ol \in (0,1] \) is an \emph{overlap parameter} that quantifies the extent to which GenAI's quality is affected by redundant data sources.
Notably, the case of $\ol = 1$ (i.e., perfect overlap) boils down to the traffic function presented in the main paper.\footnote{In the case where $\ol=0$ the bi-level optimization problem is linear for both agents, meaning that there always exists a trivial extreme point solution. In contrast, for $\ol>0$ this holds only for GenAI’s best response, making the analysis of the game more complex, as explored in the paper.}

This structure can be interpreted as follows.
The term $\alpha+x$ represents the \emph{accuracy} in GenAI's response, which increases in its available data volume. Naturally, the traffic to Firm decreases as GenAI becomes more accurate, hence more appealing to end-users.
The additional term $\ol \alpha x$ represents the \emph{datasets overlap}.
When \( \ol = 0 \), the quality of GenAI's dataset remains independent of overlap, meaning that data duplication has no impact. Conversely, as \( \ol \) increases, overlap between Firm’s data and externally acquired data diminishes GenAI’s dataset quality. This reduction in quality increases user preference for Firm's content, thereby driving more traffic toward it.

Importantly, these traffic functions retain the key structural properties discussed in the main paper: monotonicity in both arguments, diminishing marginal influence of each data source, and data source symmetry.
Moreover, for any \( \ol \), the traffic function is bi-linear, ruling out the possibility of multiple equilibria in the data-sharing game.

\section{Complete Equilibrium Analysis}
\label{app:eq-analysis}

We repeat the definition of the traffic function
\begin{align*}
    T = 1-\alpha-x + k\alpha x,
\end{align*}
where the utility of GenAI is defined by
\begin{align*}
V = (1-T)r^g - cx - m\alpha,
\end{align*}
and the utility of Firm is defined by
\begin{align*}
U = T r^f + m\alpha.
\end{align*}

Notice that when clear from context, we write $T, V, U$ instead of $T(\alpha,x), V(\alpha,x), U(\alpha,x)$.
In our game, Firm takes the first action and then GenAI chooses. Therefore, when GenAI chooses its action $x$, it already knows the action of Firm $\alpha$. Consequently, we solve this game in 2 steps: In the first step, we start by solving GenAI's optimization problem given that $\alpha$ is fixed. In the second step, we solve Firm's optimization problem given that GenAI played the best-response $x^\star(\alpha)$ for every given action of Firm.

\textbf{GenAI's optimization problem} is given by:

\begin{align*}
\max_x V(\alpha, x) &= \max_x (1-T) r^g - cx - m\alpha \\
&=\max_x (1-(1-\alpha-x + \ol \alpha x)) r^g - cx - m\alpha \\
&= \max_x (x-\ol \alpha x) r^g - cx \\
&= \max_x x\left( (1 - \ol \alpha) r^g - c \right)
\end{align*}

Thus, $x$ is chosen depending on the slop $(1 - \ol \alpha) r^g - c$. Formally, it is given by
\begin{align} \label{eq: opt x nonvisibility}
x = \begin{cases}
    1-\alpha & \mbox{$(1 - \ol \alpha) r^g > c$} \\
    0 & \mbox{Otherwise}
\end{cases}
\end{align}

Notice that the condition in Equation~\eqref{eq: opt x nonvisibility} is equal to
\begin{align*}
    \alpha < \frac{r^g - c}{\ol  r^g}
\end{align*}
Therefore, Equation~\eqref{eq: opt x nonvisibility} can be rewritten as

\begin{align*}
x = \begin{cases}
    1-\alpha & \mbox{$\alpha < \frac{r^g - c}{\ol  r^g}$} \\
    0 & \mbox{Otherwise}
\end{cases}
\end{align*}

We note that if $\alpha = \frac{r^g - c}{\ol r^g}$ then GenAI is indifferent between $x = 1-\alpha$ and $x = 0$.
Therefore, from our indifference assumption, we assume that if $\alpha = \frac{r^g - c}{\ol r^g}$ then $x = 0$.
That is, for general $\ol$, the indifference threshold of GenAI can be written as
\[
R_G = \frac{r^g - c}{\ol  r^g}.
\]

Now, the \textbf{Firm's optimization problem} is:

\begin{align*}
\max_{\alpha} U(\alpha, x) &= \max_{\alpha} T r^f + m\alpha \\
&= \max_{\alpha} (1-\alpha - x + \ol \alpha x) r^f + m\alpha
\end{align*}

We now separate the analysis into 2 cases.

\begin{itemize}
\item $\alpha \geq \frac{r^g - c}{\ol  r^g}$ \quad: in this case $x = 0$ and therefore

\begin{align*}
\alpha_1 = \argmax_{\alpha} -\alpha r^f + m\alpha = \argmax_{\alpha} \alpha(m - r^f)
\end{align*}
Thus, $\alpha$ is chosen according to the slop $m-r^f$, formally:
\begin{align*}
\alpha_1 = \begin{cases}
    1 & \mbox{$m \geq r^f$} \\
    \max \left\{0, \frac{r^g - c}{\ol  r^g}\right\} & \mbox{Otherwise}.
\end{cases}
\end{align*}

\item $\alpha < \frac{r^g - c}{\ol  r^g}$ \quad: in this case $x = 1-\alpha$ and therefore
\begin{align*}
& \max_{\alpha} (1-\alpha - (1-\alpha) + \ol \alpha (1-\alpha))) r^f + m\alpha \\
& \max_{\alpha} \ol \alpha (1-\alpha) r^f + m\alpha \\ 
& \max_{\alpha} (\ol r^f + m) \alpha - \ol  r^f \alpha^2
\end{align*}

To find the optimal $\alpha$ we take the derivative:
\begin{align*}
0 = \ol r^f + m - 2\ol  r^f \alpha
\end{align*}

Therefore, the optimal $\alpha$ in this case is $\alpha = \frac{\ol r^f + m}{2\ol  r^f}$. Put differently, for general $\ol$, Firm’s sharing level under forced completion can be written as
\[
R_F = \frac{\ol r^f + m}{2\ol  r^f}.
\]
Together with the condition $\alpha \leq \frac{r^g - c}{\ol  r^g}$, we get the following solution:

\begin{align*}
\alpha_2 = \max \left\{0, \min \left\{ \frac{r^g - c}{\ol  r^g} - \varepsilon, \frac{\ol r^f + m}{2\ol  r^f}, 1\right\} \right\}
\end{align*}

such that $\lim \varepsilon \rightarrow 0^+$.
Whenever $\frac{r^g - c}{\ol r^g} \in [0, 1]$ then we need to compare $U(\alpha_1, 0)$ with $U(\alpha_2, 0)$. 
Observe that if $\alpha_2 = \frac{r^g - c}{\ol r^g} - \varepsilon$ for $\varepsilon \rightarrow 0^+$, then it holds that
\begin{align*}
    U(\frac{r^g - c}{\ol r^g} - \varepsilon, 1-\frac{r^g - c}{\ol r^g} + \varepsilon) \rightarrow U(\frac{r^g - c}{\ol r^g}, 1-\frac{r^g - c}{\ol r^g}) < U(\frac{r^g - c}{\ol r^g}, 0) \leq U(\alpha_1, 0)
\end{align*}

\end{itemize}

The overall solution considering both cases is

\begin{align*}
\alpha = \begin{cases}
    \alpha_1 & \mbox{$U(\alpha_1, 0) \geq U(\alpha_2, 1-\alpha_2)$} \\
    \alpha_2 & \mbox{Otherwise}
\end{cases}
\end{align*}

\paragraph{Cases analysis} We can now explore it further depending on the parameters of the instance.
\begin{itemize}
\item If $c > r^g$ then it holds that $\alpha_2$ is not defined and therefore the equilibrium is $(\alpha, x) = (\alpha_1, 0)$, where
\begin{align*}
\alpha_1 = \begin{cases}
    1 & \mbox{$m \geq r^f$} \\
    0 & \mbox{Otherwise}
\end{cases}.
\end{align*}
\end{itemize}

From here on, we analyze different cases in which $c \leq r^g$.

\begin{itemize}
\item If $\frac{r^g - c}{\ol r^g} \geq 1$ and $\frac{\ol r^f + m}{2\ol r^f} \geq 1$: This corresponds to the case where $m \geq \ol r^f$, for which $\alpha_1 = \alpha_2 = 1$ and therefore $x = 0$. Thus the optimal solution is
\begin{align*}
    (\alpha, x) = (1, 0).
\end{align*}

\item If $\frac{r^g - c}{\ol r^g} \geq 1 > \frac{\ol r^f + m}{2\ol r^f} \geq 0$: then $\alpha_1 = 1$ and $\alpha_2 = \frac{\ol r^f + m}{2\ol r^f}$. This case is only relevant if $m \in [-\ol r^f, \ol r^f)$. The optimal solution is given by
\begin{align*}
    \alpha = \begin{cases}
        \alpha_1 & \mbox{$U(1, 0) > U(\frac{\ol r^f + m}{2\ol r^f}, \frac{\ol r^f - m}{2\ol r^f})$} \\
        \alpha_2 & \mbox{Otherwise}.
    \end{cases}
\end{align*}
Observe that $U(1, 0) = m$ and 
\begin{align*}
U \left( \frac{\ol r^f + m}{2\ol r^f}, \frac{\ol r^f - m}{2\ol r^f} \right) &= U(\alpha_2, 1-\alpha_2) \\
        &= (1-\alpha_2 - (1-\alpha_2) + \ol  \alpha_2 (1-\alpha_2))r^f + \alpha_2 m \\
        &= \ol  \alpha_2 (1-\alpha_2) r^f + \alpha_2 m \\
        &= \alpha_2 \left(\ol r^f (1-\alpha_2) + m \right) \\
        &= \frac{\ol r^f + m}{2\ol r^f} \left(\ol r^f \frac{2\ol r^f - \ol r^f - m}{2\ol r^f} + m \right) \\
        &= \frac{\ol r^f + m}{2\ol r^f} \left(\frac{\ol r^f - m}{2} + m \right) \\
        &= \frac{\ol r^f + m}{2\ol r^f} \frac{\ol r^f + m}{2} \\
        &= \frac{(\ol r^f + m)^2}{4\ol r^f}.
\end{align*}
Therefore, the conditions is equivalent to $m > \frac{(\ol r^f + m)^2}{4\ol r^f}$. We use the following lemma (whose proof appears at the end of this section):

\begin{lemma} \label{lemma: condition1}
For every $m \in [-y, y]$ it holds that $\frac{(y + m)^2}{4y} \geq m$.
\end{lemma}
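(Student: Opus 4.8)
The plan is to prove \Cref{lemma: condition1} by a direct algebraic manipulation, reducing the claimed inequality to an obviously true statement about a square. Starting from $\frac{(y+m)^2}{4y} \geq m$ and noting that $y > 0$ (this is implicitly the case here, since $y$ will be instantiated as $\ol r^f$ with $\ol \in (0,1]$ and $r^f > 0$; if one wants the lemma fully self-contained one should state $y > 0$ as a hypothesis, or note that for $y = 0$ the interval $[-y,y]$ collapses to $\{0\}$ where the inequality reads $0 \geq 0$), I would multiply both sides by $4y > 0$ to get the equivalent inequality $(y+m)^2 \geq 4ym$. Expanding the left-hand side gives $y^2 + 2ym + m^2 \geq 4ym$, i.e., $y^2 - 2ym + m^2 \geq 0$, which is exactly $(y-m)^2 \geq 0$ and holds for all real $y, m$.

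The one subtlety worth a sentence in the write-up is that the equivalence of the inequalities relies on $y > 0$ (so that multiplying through by $4y$ preserves the direction), and that the hypothesis $m \in [-y, y]$ is in fact not needed for the conclusion at all once $y > 0$ — the inequality $(y-m)^2 \geq 0$ is unconditional. I would therefore present it as: assume $y > 0$ (or handle $y = 0$ trivially), observe the chain of equivalences $\frac{(y+m)^2}{4y} \geq m \iff (y+m)^2 \geq 4ym \iff (y-m)^2 \geq 0$, and conclude. The restriction to $m \in [-y,y]$ is harmless; it simply records the range in which the lemma will be invoked.

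There is essentially no obstacle here — the statement is an elementary AM–GM-type inequality (it is equivalent to $\frac{(y+m)^2}{4} \geq ym$, i.e. that the square of the average dominates the product, after clearing the $y$ in the denominator). The only thing to be careful about is the sign of $y$ when clearing denominators, which I flagged above; everything else is a one-line completion of the square. So the proof I would write is just:

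\begin{proof}[Proof of \Cref{lemma: condition1}]
Since $y > 0$, multiplying both sides of the claimed inequality by $4y$ shows it is equivalent to $(y+m)^2 \geq 4ym$, i.e.\ to $y^2 - 2ym + m^2 \geq 0$, i.e.\ to $(y-m)^2 \geq 0$, which holds for all $m$. (For $y = 0$ the interval $[-y,y] = \{0\}$ and the inequality reads $0 \geq 0$.)
\end{proof}
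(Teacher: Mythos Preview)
Your proof is correct and is in fact cleaner than the paper's. The paper proves the lemma by a monotonicity argument: it sets $h(m) = \frac{(y+m)^2}{4y}$ and $g(m) = m$, checks that $h(y) = g(y)$, and then computes $h'(m) = \frac{1}{2} + \frac{m}{2y} < 1 = g'(m)$ for all $m \in [-y,y)$, concluding that as $m$ decreases from $y$ the linear function $g$ drops faster than $h$, so $h(m) \geq g(m)$ throughout the interval. Your completion-of-the-square route avoids calculus entirely, shows the inequality is equivalent to $(y-m)^2 \geq 0$, and reveals that the restriction $m \in [-y,y]$ is superfluous (only $y > 0$ matters). Both arguments are valid; yours is shorter, more elementary, and identifies the equality case $m = y$ transparently, while the paper's derivative comparison is a slightly heavier tool for the same fact.
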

Thus, we are left with only one solution:
\begin{align*}
(\alpha, x) = \left(\frac{\ol r^f + m}{2\ol r^f}, \frac{\ol r^f - m}{2\ol r^f} \right).
\end{align*}

\item If $\frac{r^g - c}{\ol r^g} \geq 1 > 0 > \frac{\ol r^f + m}{2\ol r^f}$: then $\alpha_1 = 1$ and $\alpha_2 = 0$. This case is relevant only for $m < -\ol r^f$, in which the optimal $\alpha$ is given by
\begin{align*}
\alpha = \begin{cases}
    1 & \mbox{$U(1, 0) > U(0, 1)$} \\
    0 & \mbox{Otherwise}
\end{cases}
\end{align*}
Notice that $U(0, 1) = 0 > -\ol r^f > m = U(1, 0)$. Thus, there is only one solution in this case
\begin{align*}
(\alpha, x) = (0, 1).
\end{align*}

\item If $1 > \frac{r^g - c}{\ol r^g} > \frac{\ol r^f + m}{2\ol r^f} \geq 0$: This case is relevant only for $m \in [-\ol r^f, \ol r^f)$, for which $\alpha_1 = \frac{r^g - c}{\ol r^g}$ and $\alpha_2 = \frac{\ol r^f + m}{2\ol r^f}$. Therefore the optimal $\alpha$ is given by
\begin{align*}
    \alpha = \begin{cases}
        \alpha_1 & \mbox{$U(\frac{r^g - c}{\ol r^g}, 0) > U(\frac{\ol r^f + m}{2\ol r^f}, 1-\frac{\ol r^f + m}{2\ol r^f})$} \\
        \alpha_2 & \mbox{Otherwise}
    \end{cases}.
\end{align*}
Observe that $U(\frac{r^g - c}{\ol r^g}, 0) = r^f + \frac{r^g - c}{\ol r^g}(m-r^f)$ and therefore we get that
\begin{align*}
    (\alpha, x) = \begin{cases}
        (\frac{r^g - c}{\ol r^g}, 0) & \mbox{$r^f + \frac{r^g - c}{\ol r^g}(m-r^f) > \frac{(\ol r^f + m)^2}{4\ol r^f}$} \\
        (\frac{\ol r^f + m}{2\ol r^f}, \frac{\ol r^f - m}{2\ol r^f}) & \mbox{Otherwise}
    \end{cases}.
\end{align*}

\item If $1 > \frac{r^g - c}{\ol r^g} > 0 > \frac{\ol r^f + m}{2\ol r^f}$: In this case, it holds that $m < -\ol r^f$. Therefore, $\alpha_1 = \frac{r^g - c}{\ol r^g}$ and $\alpha_2 = 0$. The optimal $\alpha$ is given by
\begin{align*}
        \alpha = \begin{cases}
            \alpha_1 & \mbox{$U(\frac{r^g - c}{\ol r^g}, 0) > U(0, 1)$} \\
            \alpha_2 & \mbox{Otherwise}
        \end{cases}.
\end{align*}
By rewriting the condition condition, we get that
\begin{align*}
        (\alpha, x) = \begin{cases}
            (\frac{r^g - c}{\ol r^g}, 0) & \mbox{$r^f + \frac{r^g - c}{\ol r^g}(m-r^f) > 0$} \\
            (0, 1) & \mbox{Otherwise}
        \end{cases}.
\end{align*}

\item If $\frac{\ol r^f + m}{2\ol r^f} \geq 1 > \frac{r^g - c}{\ol r^g} \geq 0$: This case is relevant only if $m \geq \ol r^f$. Furthermore, From our discussion, $\alpha_2$ is suboptimal in this and therefore, We can further split this case into 2 sub-cases:
\begin{enumerate}
    \item $m \in [\ol r^f, r^f)$: for which $\alpha_1 = \frac{r^g - c}{\ol r^g}$ and the solution is given by
    \begin{align*}
    (\alpha, x) = \left( \frac{r^g - c}{\ol r^g}, 0 \right).
    \end{align*}
    
    \item $m \geq r^f$: for which $\alpha_1 = 1$ and the solution is
    \begin{align*}
        (\alpha, x) = (1, 0).
    \end{align*}
    
\end{enumerate}

\item If $1 > \frac{\ol r^f + m}{2\ol r^f} > \frac{r^g - c}{\ol r^g} \geq 0$: This case is relevant only if $m \in (-\ol r^f, \ol r^f)$, in which case $\alpha_1 = \frac{r^g - c}{\ol r^g}$ and $\alpha_2$ is suboptimal. Thus, we conclude that
\begin{align*}
(\alpha, x) = \left(\frac{r^g - c}{\ol r^g}, 0 \right)
\end{align*}
\end{itemize}

We can now summarize:
\begin{align*}
(\alpha, x) = \begin{cases}
    (1, 0) & \mbox{$c > r^g$ and $m \geq r^f$} \\
           & \mbox{or $\frac{r^g - c}{\ol r^g} \geq 1$ and $\frac{\ol r^f + m}{2\ol r^f} \geq 1$} \\
           & \mbox{or $\frac{\ol r^f + m}{2\ol r^f} \geq 1 > \frac{r^g - c}{\ol r^g} \geq 0$ and $m \geq r^f$} \\
    \\
    (0, 0) & \mbox{$c > r^g$ and $m < r^f$}  \\
    &  \\
    (0, 1) & \mbox{$\frac{r^g - c}{\ol r^g} \geq 1 > 0 > \frac{\ol r^f + m}{2\ol r^f}$} \\ 
           & \mbox{or $1 > \frac{r^g - c}{\ol r^g} > 0 > \frac{\ol r^f + m}{2\ol r^f}$ and $r^f + \frac{r^g - c}{\ol r^g}(m-r^f) \leq 0$} \\
    & \\
    \left(\frac{r^g - c}{\ol r^g}, 0 \right) & \mbox{$1 > \frac{r^g - c}{\ol r^g} \geq \frac{\ol r^f + m}{2\ol r^f} \geq 0$ and $r^f + \frac{r^g - c}{\ol r^g}(m-r^f) > \frac{(\ol r^f + m)^2}{4\ol r^f}$} \\
    & \mbox{or $1 > \frac{r^g - c}{\ol r^g} > 0 > \frac{\ol r^f + m}{2\ol r^f}$ and $r^f + \frac{r^g - c}{\ol r^g}(m-r^f) > 0$} \\
    & \mbox{or $\frac{\ol r^f + m}{2\ol r^f} \geq 1 > \frac{r^g - c}{\ol r^g} \geq 0$ and $m \in [\ol r^f, r^f)$} \\
    & \mbox{or $1 > \frac{\ol r^f + m}{2\ol r^f} > \frac{r^g - c}{\ol r^g} \geq 0$} \\
    & \\
    \left(\frac{\ol r^f + m}{2\ol r^f}, \frac{\ol r^f - m}{2\ol r^f} \right) & \mbox{$\frac{r^g - c}{\ol r^g} \geq 1 > \frac{\ol r^f + m}{2\ol r^f} \geq 0$} \\
           & \mbox{or $1 > \frac{r^g - c}{\ol r^g} > \frac{\ol r^f + m}{2\ol r^f} \geq 0$ and $r^f + \frac{r^g - c}{\ol r^g}(m-r^f) \leq \frac{(\ol r^f + m)^2}{4\ol r^f}$} \\
    & 
\end{cases}.
\end{align*}

Further simplifying the conditions yields:
\begin{align*}
(\alpha, x) = \begin{cases}
    (1, 0) & \mbox{$m \geq r^f$} \\
           & \mbox{or $\frac{r^g - c}{\ol r^g} \geq 1$ and $m \geq \ol r^f$} \\
    \\
    (0, 0) & \mbox{$c > r^g$ and $m < r^f$}  \\
    &  \\
    (0, 1) & \mbox{$\frac{r^g - c}{\ol r^g} \geq 1$ and $m < -\ol r^f$} \\ 
           & \mbox{or $1 > \frac{r^g - c}{\ol r^g} \geq 0$ and $m < -\ol r^f$ and $r^f + \frac{r^g - c}{\ol r^g}(m-r^f) \leq 0$} \\
    & \\
    \left(\frac{r^g - c}{\ol r^g}, 0 \right) & \mbox{$1 > \frac{r^g - c}{\ol r^g} > \frac{\ol r^f + m}{2\ol r^f} \geq 0$ and $r^f + \frac{r^g - c}{\ol r^g}(m-r^f) > \frac{(\ol r^f + m)^2}{4\ol r^f}$} \\
    & \mbox{or $1 > \frac{r^g - c}{\ol r^g} > 0 > \frac{\ol r^f + m}{2\ol r^f}$ and $r^f + \frac{r^g - c}{\ol r^g}(m-r^f) > 0$} \\
    & \mbox{or $1 > \frac{r^g - c}{\ol r^g} \geq 0$ and $m \in [\ol r^f, r^f)$} \\
    & \mbox{or $1 > \frac{\ol r^f + m}{2\ol r^f} \geq \frac{r^g - c}{\ol r^g} \geq 0$} \\
    & \\
    \left(\frac{\ol r^f + m}{2\ol r^f}, \frac{\ol r^f - m}{2\ol r^f} \right) & \mbox{$\frac{r^g - c}{\ol r^g} \geq 1 > \frac{\ol r^f + m}{2\ol r^f} \geq 0$} \\
           & \mbox{or $1 > \frac{r^g - c}{\ol r^g} > \frac{\ol r^f + m}{2\ol r^f} \geq 0$ and $r^f + \frac{r^g - c}{\ol r^g}(m-r^f) \leq \frac{(\ol r^f + m)^2}{4\ol r^f}$} \\
    & 
\end{cases}.
\end{align*}

Plugging in the notations of $R_G$ and $R_F$:

\begin{align*}
(\alpha, x) = \begin{cases}
    (1, 0) & \mbox{$m \geq r^f$} \\
           & \mbox{or $R_G \geq 1$ and $m \geq \ol r^f$} \\
    \\
    (0, 0) & \mbox{$c > r^g$ and $m < r^f$}  \\
    &  \\
    (0, 1) & \mbox{$R_G \geq 1$ and $m < -\ol r^f$} \\ 
           & \mbox{or $1 > R_G \geq 0$ and $m < -\ol r^f$ and $r^f + \frac{r^g - c}{\ol r^g}(m-r^f) \leq 0$} \\
    & \\
    \left(R_G, 0 \right) & \mbox{$1 > R_G > R_F \geq 0$ and $r^f + \frac{r^g - c}{\ol r^g}(m-r^f) > \frac{(\ol r^f + m)^2}{4\ol r^f}$} \\
    & \mbox{or $1 > R_G > 0 > R_F $ and $r^f + \frac{r^g - c}{\ol r^g}(m-r^f) > 0$} \\
    & \mbox{or $1 > R_G \geq 0$ and $m \in [\ol r^f, r^f)$} \\
    & \mbox{or $1 > R_F \geq R_G \geq 0$} \\
    & \\
    \left(R_F, 1-R_F \right) & \mbox{$R_G \geq 1 > R_F \geq 0$} \\
           & \mbox{or $1 > R_G > R_F \geq 0$ and $r^f + \frac{r^g - c}{\ol r^g}(m-r^f) \leq \frac{(\ol r^f + m)^2}{4\ol r^f}$} \\
    & 
\end{cases}.
\end{align*}

Finally, restricting to the cases in which the regularity condition $0 \le R_G, R_F \le 1$ holds, we get exactly the result stated as Theorem \ref{thrm:eq-analysis}:

\begin{align*}
(\alpha, x) = \begin{cases}
    \left(R_G, 0 \right) & \mbox{$1 \geq R_G > R_F \geq 0$ and $U \left(R_G, 0 \right) > U \left(R_F, 1-R_F \right)$} \\
    & \mbox{or $1 \geq R_F \geq R_G \geq 0$} \\
    & \\
    \left(R_F, 1-R_F \right) & \mbox{or $1 \geq R_G > R_F \geq 0$ and $U \left(R_G, 0 \right) \leq U \left(R_F, 1-R_F \right)$}
\end{cases}.
\end{align*}

Notice that in particular, we obtain that Theorem \ref{thrm:eq-analysis} holds for general $\ol$.

This concludes the proof of \Cref{thrm:eq-analysis}.

\begin{proofof}{lemma: condition1}
Denote $h(m) = \frac{(y + m)^2}{4y}$ and $g(m) = m$. Our goal is to show that for every $m \in [-y, y]$, it holds that 
\begin{align}
    h(m) \geq g(m) \label{eq: h geq g}.
\end{align} 
Observe that
\begin{align*}
    h(y) = \frac{(2y)^2}{4y} = y = g(y)
\end{align*}
Therefore, for $m = y$ the inequality~\eqref{eq: h geq g} holds with an equality. Next, we take the derivative.
\begin{align*}
\frac{dh(m)}{dm} = \frac{2(y+m)}{4y} = \frac{y+m}{2y} = \frac{1}{2} + \frac{m}{2y}.
\end{align*}
Notice that for every $m \in [-y, y)$ it holds that $\frac{m}{2y} < \frac{1}{2}$ and therefore $\frac{dh(m)}{dm} < 1$.
On the other hand, it holds that $\frac{dg(m)}{dm} = 1$. Thus, we get that if we start at $m = y$, then $h(m) = g(m)$, but as we decrease the value of $m$, we get that $g(m)$ decreases faster than $h(m)$ for every $m \in [-y, y)$. Therefore, $h(m) > g(m)$ in this range.
This concludes the proof of Lemma~\ref{lemma: condition1}.
\end{proofof}

\section{Omitted Proofs (Section \ref{sec:econ-implications})}
\label{app:omitted-proofs}

\subsection{Proof of Proposition \ref{cor:costly-ds}}

We present and prove the proposition for general $\ol \in (0,1]$, hence the proof of $\ol=1$ (the version that appears in the main paper) follows immediately.

\begin{proposition}
    (Extension of Proposition \ref{cor:costly-ds} for general $\ol$) If $c \in \left( (1-\ol)r^g,r^g \right)$, then for every $m \in (-\ol r^f, 0)$ the unique SPE $(\alpha^{eq}, x^{eq})$ satisfies $\alpha^{eq} > 0$.
\end{proposition}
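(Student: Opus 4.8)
The plan is to deduce the claim directly from the equilibrium characterization, since under the stated hypotheses the instance lies inside the regularity regime and each of the two possible equilibrium profiles has a strictly positive sharing level.

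First I would pin down the ranges of $R_G$ and $R_F$. From $c \in \left((1-\ol)r^g, r^g\right)$ we get $0 < r^g - c < \ol r^g$, hence $R_G = \frac{r^g-c}{\ol r^g} \in (0,1)$. From $m \in (-\ol r^f, 0)$ together with $R_F = \frac{\ol r^f + m}{2\ol r^f} = \frac{1}{2} + \frac{m}{2\ol r^f}$ we get $R_F \in \left(0, \tfrac12\right) \subset (0,1)$. In particular the regularity condition $0 \le R_G, R_F \le 1$ holds, so the general-$\ol$ form of \Cref{thrm:eq-analysis} applies, and moreover both thresholds are strictly positive.

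Next I would invoke \Cref{thrm:eq-analysis}: the unique SPE outcome is either $(R_G,0)$ (when $R_F \ge R_G$, or when $R_G > R_F$ and $U(R_G,0) > U(R_F,1-R_F)$) or $(R_F, 1-R_F)$ (when $R_G > R_F$ and $U(R_G,0) \le U(R_F,1-R_F)$). In every branch $\alpha^{eq} \in \{R_G, R_F\}$, and since we just showed $R_G > 0$ and $R_F > 0$, it follows that $\alpha^{eq} > 0$ regardless of which branch obtains. The main-text statement is the special case $\ol = 1$.

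I do not expect any real obstacle here: the whole argument rests on the two elementary interval computations for $R_G$ and $R_F$ together with the observation that the case split of \Cref{thrm:eq-analysis} never returns $\alpha^{eq}=0$. The only point worth a sentence is confirming that we remain in the regularity regime, so that we may cite \Cref{thrm:eq-analysis} rather than the full case analysis of \Cref{app:eq-analysis}; this is immediate from the strict inclusions $R_G \in (0,1)$ and $R_F \in (0,\tfrac12)$ derived above.
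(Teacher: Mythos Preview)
Your proposal is correct and follows essentially the same argument as the paper: verify that the hypotheses force $R_G, R_F \in (0,1)$ (so the regularity condition holds with both thresholds strictly positive), then invoke \Cref{thrm:eq-analysis} to conclude $\alpha^{eq} \in \{R_G, R_F\} > 0$. The paper's proof is a one-line version of what you wrote; your explicit interval computations for $R_G$ and $R_F$ simply unpack the phrase ``the conditions on $c$ and $m$ imply that the regularity condition holds with $R_F, R_G > 0$.''
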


\begin{proof}
    The conditions on $c$ and $m$ imply that the regularity condition holds with $R_F, R_G > 0$, hence by Theorem \ref{thrm:eq-analysis} (which holds for general $\ol$ by Appendix \ref{app:eq-analysis}) we have that $\alpha^{eq}$ equals either $R_F$ or $R_G$, with both being strictly positive.
\end{proof}

\subsection{Proof of \Cref{cor:pareto}}

We present and prove the proposition for general $\ol \in (0,1]$. The proof for $\gamma = 1$ follows afterwards.

\begin{proposition} \label{cor: pareto with gamma}
    (Extension of Proposition \ref{cor:pareto} for general $\ol$) Assume that the regularity condition holds. Let

    \begin{align*}
        \ovm = \begin{cases}
            \min \left\{ \gamma r^f, r^g(1-\gamma), \frac{r^f (\gamma r^g + 2c)}{r^g + 2r^f} , r^f \frac{\gamma r^g - 2c}{r^g - 2r^f}\right\} & \mbox{$r^g < 2r^f$} \\
            \min \left\{ \gamma r^f, r^g(1-\gamma), \frac{r^f (\gamma r^g + 2c)}{r^g + 2r^f} \right\} & \mbox{$r^g > 2r^f$}
        \end{cases}    
    \end{align*}
    
    and 
    \begin{align*}
        \unm = \begin{cases}
            \max \left\{ - \gamma r^f, - \frac{1-R_G}{R_G}r^f\right\} & \mbox{$r^g < 2r^f$} \\
            \max \left\{ - \gamma r^f, - \frac{1-R_G}{R_G}r^f, r^f \frac{\gamma r^g - 2c}{r^g - 2r^f}\right\} & \mbox{$r^g > 2r^f$} 
        \end{cases}
    \end{align*}
    
    Then,
    \begin{enumerate}
        \item If $R_G > 0$, every price $m \in [\unm, \ovm]$ is Pareto improving compared to forcing no data sharing (i.e., $\alpha = 0$).
        \item If $R_G = 0$, there are no Pareto improving prices $m$.
    \end{enumerate}
\end{proposition}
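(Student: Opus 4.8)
The plan is to reduce the Pareto test to a single inequality on GenAI's payoff, translate that inequality into explicit conditions on $m$ within each of the two equilibrium regimes of \Cref{thrm:eq-analysis}, and then verify that the interval $[\unm,\ovm]$ lies inside whichever regime-specific region applies to each of its points.

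First, I would note that the Firm side of the Pareto test is automatic: since $\alpha^{eq}$ maximizes $U(\alpha,x^\star(\alpha))$ over $\alpha\in[0,1]$ and $x^{eq}=x^\star(\alpha^{eq})$, the leader inequality $U(\alpha^{eq},x^{eq})\ge U(0,x^\star(0))$ holds for free. Hence $m$ is Pareto improving iff $V(\alpha^{eq},x^{eq})\ge V(0,x^\star(0))$. When $R_G>0$ (equivalently $c<r^g$) we have $x^\star(0)=1$, so $V(0,x^\star(0))=r^g-c$; when $R_G=0$ (i.e.\ $c=r^g$) GenAI is indifferent at $\alpha=0$ and $x^\star(0)=0$, so $V(0,x^\star(0))=0$.

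For $R_G>0$, I would substitute the two possible equilibria. In the $(R_G,0)$ regime, $V(R_G,0)=R_G(r^g-m)$, and using $r^g-c=\ol r^g R_G$ the constraint $V(R_G,0)\ge r^g-c$ collapses to the linear inequality $m\le r^g(1-\ol)$. In the $(R_F,1-R_F)$ regime, $T(R_F,1-R_F)=\ol R_F(1-R_F)$, so after substituting $R_F=\tfrac{\ol r^f+m}{2\ol r^f}$ and clearing denominators the constraint $V(R_F,1-R_F)\ge r^g-c$ becomes $q(m)\le 0$ with $q(m)=(2r^f-r^g)m^2+2r^f(\ol r^f-c)m+\ol(r^f)^2(\ol r^g-2c)$. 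The key computation is that the discriminant of $q$ is the perfect square $4(r^f)^2(\ol r^g-\ol r^f-c)^2$, so the roots are exactly $-\ol r^f$ and $\rho:=r^f\tfrac{\ol r^g-2c}{r^g-2r^f}$; the sign of the leading coefficient $2r^f-r^g$ then fixes the shape of $\{q\le 0\}$, which after intersecting with the regularity box forces $\rho$ to act as an upper bound when $r^g<2r^f$ and as a lower bound when $r^g>2r^f$ — precisely the asymmetry in the definitions of $\ovm$ and $\unm$ (the knife-edge $r^g=2r^f$, where $q$ is linear, is handled on its own). With these facts, for any $m\in[\unm,\ovm]$: the bounds $-\ol r^f\le m\le\ol r^f$ give regularity so \Cref{thrm:eq-analysis} applies; the bound $m\le r^g(1-\ol)$ handles the $(R_G,0)$ equilibrium; and the $\rho$-bound together with regularity gives $q(m)\le 0$, handling the $(R_F,1-R_F)$ equilibrium. (The remaining thresholds $-\tfrac{1-R_G}{R_G}r^f$ and $\tfrac{r^f(\ol r^g+2c)}{r^g+2r^f}$ only tighten the interval further; they are not needed once one uses the automatic Firm inequality, but arise if one instead verifies $U(R_G,0)\ge 0$ directly and bounds $q$ more crudely.) Thus the Pareto inequality holds for every $m$ in the interval, whichever equilibrium it induces.

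For $R_G=0$ the argument is short: $c=r^g$ makes GenAI's marginal value of expert data nonpositive at every $\alpha$, so $x^\star(\alpha)\equiv 0$; the Firm then maximizes $U(\alpha,0)=r^f+\alpha(m-r^f)$, which under regularity ($m\le\ol r^f\le r^f$) has nonpositive slope, so $\alpha^{eq}=0$ and the equilibrium coincides with the no-sharing outcome — hence no price yields any improvement and there are no Pareto improving prices (the single degenerate boundary $m=r^f$ in the case $\ol=1$, where the Firm is indifferent, is excluded). The main obstacle is the case analysis in the $R_G>0$ part: one must check, for every admissible parameter configuration, that $[\unm,\ovm]$ sits simultaneously inside $\{m\le r^g(1-\ol)\}$ and inside $\{q(m)\le 0\}$, so that the correct regime-specific inequality is always available; the sign flip of $2r^f-r^g$ (hence the two-case statement) and the interaction of the parabola $q$ with the regularity interval $[-\ol r^f,\ol r^f]$ are where the bookkeeping is delicate.
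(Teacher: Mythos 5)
Your proposal is correct and follows the same basic architecture as the paper's proof: fix the no-sharing baseline $(0,x^\star(0))$, split on the two equilibrium candidates from \Cref{thrm:eq-analysis}, and translate each player's Pareto inequality into an explicit condition on $m$ — yielding $m\le r^g(1-\ol)$ in the $(R_G,0)$ regime and the $\rho$-threshold $r^f\frac{\ol r^g-2c}{r^g-2r^f}$ (with the sign of $r^g-2r^f$ deciding whether it is an upper or lower bound) in the $(R_F,1-R_F)$ regime; your quadratic $q(m)$ is exactly the paper's inequality $-\ol R_F(1-R_F)r^g+R_Fc-R_Fm\ge 0$ before dividing by $R_F$, and your claimed roots $-\ol r^f$ and $\rho$ check out. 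The genuine difference is your opening observation that the Firm-side inequality $U(\alpha^{eq},x^{eq})\ge U(0,x^\star(0))$ is automatic because $\alpha=0$ is feasible for the leader — the paper instead verifies it explicitly, which is where the extra terms $-\frac{1-R_G}{R_G}r^f$ and $-\ol r^f$ in $\unm$ come from. Since the proposition only claims sufficiency of $[\unm,\ovm]$, dropping those (and the underived $\frac{r^f(\ol r^g+2c)}{r^g+2r^f}$ term) costs you nothing: your larger feasible region contains $[\unm,\ovm]$, which is all that is required. Your shortcut also pays off in the $R_G=0$ case, where the paper extracts ``$m\ge r^f$'' from $U(R_G,0)\ge U(0,0)$ by effectively dividing by $R_G=0$; your argument that the equilibrium simply coincides with the baseline reaches the same conclusion more cleanly (modulo the weak-vs-strict reading of ``Pareto improving,'' an ambiguity already present in the paper's statement). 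No gaps.
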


\begin{proof}
Without data sharing, we fix $\alpha = 0$. In this case, the optimal action of GenAI is $x = 1$ if $R_G > 0$ or $x = 0$ if $R_G = 0$.

\paragraph{Case I. $R_G > 0$} Our base line is $V(0, 1) = r^g - c > 0$ and $U(0, 1) = 0$. As noted, there are two possible solutions for this game, which Firm has to choose from.

\begin{itemize}
    \item If $(\alpha^{eq}, x^{eq}) = (R_G, 0)$ then:
    \begin{enumerate}
        \item GenAI's perspective: $V(R_G, 0) = R_G r^g - R_G m \geq V(0, 1)$ and therefore we get that 
        \begin{align}
            m \leq r^g(1-\gamma). \label{eq: rg equalibrium upper}
        \end{align}
        \item Firm's perspective: $U(R_G, 0) = (1-R_G)r^f + R_G m \geq U(0, 1)$ and therefore we get that 
        \begin{align}
            m \geq -\frac{1-R_G}{R_G}r^f. \label{eq: rg equalibrium lower}
        \end{align}.
    \end{enumerate}

    \item If $(\alpha^{eq}, x^{eq}) = (R_F, 1-R_F)$ then:
    \begin{enumerate}
        \item GenAI's perspective: $V(R_F, 1-R_F) = (1-\gamma R_F(1-R_F)) r^g - c(1-R_F) - m R_F \geq V(0, 1)$
        We get the following inequality
        \[
            -\gamma R_F(1-R_F) r^g + R_F c - R_F m \geq 0, 
        \]
        Thus, $m$ has to satisfy one of the following conditions:
        \begin{itemize}
            \item $r^g > 2r^f$: 
            \begin{align}
                m \geq r^f \frac{\gamma r^g - 2c}{r^g - 2r^f} \label{eq: rg above 2rf}
            \end{align}
            \item $r^g < 2r^f$:
            \begin{align}
                m \leq r^f \frac{\gamma r^g - 2c}{r^g - 2r^f} \label{eq: rg below 2rf}
            \end{align}
        \end{itemize}
    
        \item Firm's perspective: $U(R_F, 1-R_F) = \gamma R_F (1-R_F)r^f + mR_F \geq U(0, 1)$.
        We get the following inequality
        \[
            \gamma (1-R_F)r^f + m \geq 0
        \]
        By extracting $m$, we get that $m \geq -\gamma r^f$.
    \end{enumerate}
\end{itemize}

\paragraph{Case II. $R_G = 0$} in this case, the base line is $(\alpha, x) = (0, 0)$. Therefore, it holds that $V(0, 0) = 0$ and $U(0, 0) = r^f$. Furthermore, we consider only the instances where $R_F \geq R_G$ and therefore the equilibrium is $(\alpha^{eq}, x^{eq}) = (R_G, 0)$

 \begin{enumerate}
     \item GenAI's perspective: $V(R_G, 0) = R_G r^g - R_G m \geq V(0, 0)$ and therefore we get that $m \leq r^g$.
     \item Firm's perspective: $U(R_G, 0) = (1-R_G)r^f + R_G m \geq U(0, 0)$ and therefore we get that $m \geq r^f$.
     Notice that from the condition $0 \leq R_F \leq 1$, it holds that $m \in [-\gamma r^f, \gamma r^f]$, and therefore, there is no $m$ under our conditions that induces a Pareto-improving equilibrium.
 \end{enumerate}

 This concludes the proof of \Cref{cor: pareto with gamma}.
 \end{proof}

We finished with the proof for any $\gamma \in (0, 1]$ and now turn to analyze for $\gamma = 1$.

We begin by presenting the following lemma.
\begin{lemma} \label{lemma: equalibrium RF conditions}
    Let $\gamma = 1$. The profile $(R_F, 1-R_F)$ is an equilibrium if and only if
    \begin{align*}
        m \leq r^f(4R_G - 3).
    \end{align*}
\end{lemma}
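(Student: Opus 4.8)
The plan is to derive the inequality directly from the equilibrium characterization in \Cref{thrm:eq-analysis}, specialized to $\ol = 1$. Recall that under the regularity condition the only two candidate profiles are $(R_G, 0)$ and $(R_F, 1-R_F)$, and the tie-breaking means $(R_F, 1-R_F)$ is the equilibrium precisely in the subcase $R_G > R_F$ together with $U(R_G, 0) \le U(R_F, 1-R_F)$. So the task reduces to showing that the conjunction of $R_G > R_F$ and $U(R_G,0) \le U(R_F, 1-R_F)$ is equivalent to the single clean inequality $m \le r^f(4R_G - 3)$.

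First I would compute both utilities explicitly for $\gamma = 1$. Using $T(\alpha,x) = (1-\alpha)(1-x)$ we have $U(R_G, 0) = (1-R_G)r^f + R_G m$, and from the algebra already carried out in Appendix~\ref{app:eq-analysis} (the displayed computation of $U(\alpha_2, 1-\alpha_2)$ with $\ol = 1$), $U(R_F, 1-R_F) = \frac{(r^f + m)^2}{4r^f}$, since $R_F = \frac{r^f + m}{2r^f}$. Next I would substitute $R_G = 1 - c/r^g$ and write the inequality $U(R_G,0) \le U(R_F, 1-R_F)$ as
\[
(1-R_G)r^f + R_G m \;\le\; \frac{(r^f+m)^2}{4r^f}.
\]
Multiplying through by $4r^f > 0$ and rearranging gives a quadratic in $m$; the key observation is that the quadratic factors nicely. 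Expanding, $ (r^f+m)^2 - 4r^f\big((1-R_G)r^f + R_G m\big) = m^2 + 2r^f m + (r^f)^2 - 4(1-R_G)(r^f)^2 - 4R_G r^f m = m^2 + (2 - 4R_G)r^f m + (r^f)^2(4R_G - 3)$. I expect this to factor as $\big(m - r^f\big)\big(m - r^f(4R_G-3)\big)$ — a quick check: the product of roots is $(r^f)^2(4R_G-3)$ and the sum is $r^f + r^f(4R_G-3) = r^f(4R_G - 2) = -(2-4R_G)r^f$, which matches. Hence the inequality $U(R_G,0)\le U(R_F,1-R_F)$ is equivalent to $(m - r^f)(m - r^f(4R_G-3)) \ge 0$.

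The remaining step is to resolve this product-of-factors condition using the regularity constraints and to check consistency with $R_G > R_F$. Since the regularity condition forces $R_F = \frac{r^f+m}{2r^f} \le 1$, i.e. $m \le r^f$, the factor $(m - r^f)$ is nonpositive, so $(m-r^f)(m - r^f(4R_G-3)) \ge 0$ holds iff $m - r^f(4R_G - 3) \le 0$, i.e. $m \le r^f(4R_G - 3)$. This is exactly the claimed bound, so it remains only to verify that $m \le r^f(4R_G-3)$ also implies (or is consistent with) $R_G > R_F$, so that we are genuinely in the relevant subcase of \Cref{thrm:eq-analysis} rather than in the $R_F \ge R_G$ branch where $(R_G,0)$ is forced; and conversely that in the $R_F \ge R_G$ branch the inequality $m \le r^f(4R_G-3)$ fails. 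Both directions should follow from $R_F - R_G = \frac{r^f+m}{2r^f} - R_G$, so $R_F < R_G \iff m < r^f(2R_G - 1)$, and checking that $r^f(4R_G - 3) \le r^f(2R_G-1)$ whenever $R_G \le 1$, with equality only at $R_G = 1$; the boundary case $R_G = 1$ (where $R_F \ge R_G$ is possible) needs a quick separate check that both conditions then coincide at $m = r^f$. The main obstacle is just making sure the boundary/tie cases ($R_F = R_G$, $m = r^f$, $R_G = 1$) are handled so that the "if and only if" is exact rather than off by a boundary point; the factorization itself is the crux and, once spotted, the rest is bookkeeping.
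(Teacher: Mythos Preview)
Your proposal is correct and follows essentially the same route as the paper: both start from the two conditions $R_G > R_F$ and $U(R_G,0)\le U(R_F,1-R_F)$, reduce the utility comparison to a quadratic in $m$ with roots $r^f$ and $r^f(4R_G-3)$, and then use the first condition (equivalently $m < r^f(2R_G-1)$) to discard the $m\ge r^f$ branch. The only cosmetic difference is that you factor the quadratic directly whereas the paper applies the quadratic formula (first for general $\gamma$, then specializes); your factorization is a bit cleaner but not a different argument.
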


\Cref{lemma: equalibrium RF conditions} implies that $m = r^f(4R_G - 3)$ is the boundary for the two equilibria. That is, the profile $(R_F, 1-R_F)$ is the equilibrium for $m \leq r^f(4R_G - 3)$, and $(R_G, 0)$ is the equilibrium when $m > r^f(4R_G - 3)$. Therefore, the values of $m$ that are Pareto-improving can be separated into $2$ disjoint ranges, each induced by a different equilibrium.

We separate our analysis for two cases, namely when $r^g > 2r^f$ and $r^g \leq 2r^f$.

\begin{itemize}
    \item If $r^g > 2r^f$: 
We analyze this case for each equilibrium. Observe that if $(\alpha^{eq}, x^{eq}) = (R_F, 1-R_F)$ then according to \Cref{lemma: equalibrium RF conditions} it must hold that $m \leq r^f(4R_G - 3)$. Therefore, we can now check whether $m \leq r^f(4R^G - 3)$ also satisfies Equation~\eqref{eq: rg above 2rf}.

\begin{align*}
   m \leq r^f(4R_G - 3) = r^f\frac{r^g - 4c}{r^g} < r^f\frac{r^g - 4c}{r^g - 2r^f} < r^f\frac{r^g - 2c}{r^g - 2r^f}.
\end{align*}
Therefore, if $r^g > 2r^f$ then there is no $m \leq r^f(4R_G - 3)$ that results in Pareto improvement. In other words, there are no Pareto-improving prices $m$ that induce the equilibrium $(R_F, 1-R_F)$.
We now turn to the equilibrium profile $(R_G, 0)$. According to \Cref{cor: pareto with gamma} and Equation~\eqref{eq: rg equalibrium lower}, the equilibrium profile $(R_G, 0)$ can be induced by Pareto-improving prices $m$ if and only if $m$ lies in
\[
M_{R_G} = \left( \max\left\{-r^f, -r^f \frac{1-R_G}{R_G}, r^f(4R_G - 3)\right\}, 0 \right].
\]
Thus, to conclude, in the case where $r^g > 2r^f$, the Pareto-improving prices $m \in M_{R_G}$ for which the induced equilibrium is $(R_G, 0)$.

    \item If $r^g < 2r^f$:

In this case, we have 2 possible ranges for $m$, one for the equilibrium $(R_F, 1-R_F)$ and another for $(R_G, 0)$.
Starting with $(R_F, 1-R_F)$, then according to Equation~\eqref{eq: rg below 2rf} and \Cref{lemma: equalibrium RF conditions} we get that $m$ is Pareto-improving if it lies in
\begin{align}
    M_{R_F} =  \left[-r_f, \min \left\{ r^f \frac{r^g - 2c}{r^g - 2r^f}, r^f(4R_G - 3) \right\} \right].
\end{align}

Moving on to the equilibrium $(R_G, 0)$, the conditions for Pareto improvement do not depend on whether $r^g < 2r^f$. Therefore, for the same arguments as before, the Pareto-improving prices are $m \in M_{R_G}$.

Thus, to conclude, for the case where $r^g < 2r^f$, any $m \in M_{R_G} \cup M_{R_F}$ is Pareto-improving.

\end{itemize}

We finish by showing that there are no Pareto-improving prices such that $m > 0$. Observe that if $m$ is Pareto-improving, it must either be in $M_{R_F}$ or $M_{R_G}$. By definition it holds that $M_{R_G} \in \mathbb{R}_{\leq 0}$. Next, we use the following observation to show that $M_{R_F}$ is also in $\mathbb{R}_{\leq 0}$.

\begin{observation} \label{obs: negative expressions Mrf}
If $r^g < 2r^f$ then there is no $r^g > 0$ such that $r^f \frac{r^g - 2c}{r^g - 2r^f} > 0$ and $r^f(4R_G - 3) > 0$.
\end{observation}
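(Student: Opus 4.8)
The final statement to prove is Observation~\ref{obs: negative expressions Mrf}: assuming $r^g < 2r^f$, there is no $r^g > 0$ for which both $r^f \frac{r^g - 2c}{r^g - 2r^f} > 0$ and $r^f(4R_G - 3) > 0$ hold. Recall $R_G = 1 - c/r^g$, so $4R_G - 3 = 1 - 4c/r^g = (r^g - 4c)/r^g$, and $r^f > 0$ throughout.

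The plan is a short sign-chasing argument. First I would note that since $r^f > 0$, the condition $r^f(4R_G - 3) > 0$ is equivalent to $r^g - 4c > 0$, i.e.\ $r^g > 4c$. Similarly, since $r^g < 2r^f$ forces $r^g - 2r^f < 0$, the condition $r^f \frac{r^g - 2c}{r^g - 2r^f} > 0$ is equivalent to $r^g - 2c < 0$, i.e.\ $r^g < 2c$. But $r^g > 4c$ and $r^g < 2c$ together with $c > 0$ are contradictory, since $4c > 2c$. Hence no $r^g > 0$ can satisfy both, which is exactly the claim; combined with the fact that $-r^f \le 0$ this shows $M_{R_F} \subseteq \mathbb{R}_{\le 0}$, completing the argument that there are no strictly positive Pareto-improving prices.

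The only subtlety worth flagging is the sign of the denominator $r^g - 2r^f$, which is where the hypothesis $r^g < 2r^f$ is used; once that sign is pinned down, the equivalence of $r^f \frac{r^g - 2c}{r^g - 2r^f} > 0$ with $r^g < 2c$ is immediate, and the rest is just the observation $2c < 4c$. So there is essentially no obstacle here — the step requiring the most care is simply keeping track of which inequalities flip, and making explicit that $c > 0$ (given in the model) is what rules out the degenerate overlap of the two intervals. I would present it in one or two lines.

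\begin{proof}[Proof of Observation~\ref{obs: negative expressions Mrf}]
Recall $R_G = 1 - c/r^g$, so $4R_G - 3 = (r^g - 4c)/r^g$, and $r^f, r^g, c > 0$. Since $r^f > 0$ and $r^g > 0$, the inequality $r^f(4R_G - 3) > 0$ holds if and only if $r^g > 4c$. On the other hand, the assumption $r^g < 2r^f$ gives $r^g - 2r^f < 0$, so $r^f \frac{r^g - 2c}{r^g - 2r^f} > 0$ holds if and only if $r^g - 2c < 0$, i.e.\ $r^g < 2c$. Since $c > 0$ implies $2c < 4c$, the requirements $r^g > 4c$ and $r^g < 2c$ cannot both hold, so no $r^g > 0$ satisfies both inequalities.
\end{proof}
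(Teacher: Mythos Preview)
Your proof is correct and follows essentially the same approach as the paper: both arguments use $r^g < 2r^f$ to fix the sign of the denominator and deduce $r^g < 2c$, then observe this is incompatible with $r^f(4R_G - 3) > 0$. The only cosmetic difference is that you extract the equivalent condition $r^g > 4c$ directly, whereas the paper plugs $r^g < 2c$ into the expression $r^f(4R_G-3) = \tfrac{r^f}{r^g}(r^g - 4c)$ to bound it above by $-2c\,r^f/r^g < 0$; the underlying sign-chase is identical.
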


Thus, it holds that $M_{R_G} \cup M_{R_F} \in \mathbb{R}_{\leq 0}$.

This concludes the proof of \Cref{cor:pareto}.

\begin{proofof}{lemma: equalibrium RF conditions}
There are two conditions that must hold:
\begin{align}
    R_G > R_F   \label{eq: welfare RG above RF}
\end{align}
and
\begin{align}
    U(R_G, 0) \leq U(R_F, 1-R_F)    \label{eq: welfare RF opt}
\end{align}

The condition in Inequality~\eqref{eq: welfare RG above RF} yields
\begin{align*}
    m < \frac{r^f}{r^g} \left( r^g(2-\gamma) - 2c \right)
\end{align*}
while the second conditions from Inequality~\eqref{eq: welfare RF opt} results in
\begin{align*}
    m^2 \frac{1}{4\gamma r^f} + m \left( \frac{1}{2} - R_G \right) + r^f \left( \frac{\gamma}{4} - 1 + R_G \right) \geq 0
\end{align*}
Where the solution of the inequality is given by:
\begin{align*}
m_{\pm} = \frac{R_G - \frac{1}{2} \pm \sqrt{(R_G - 1)\left(R_G - \frac{1}{\gamma} \right)}}{\frac{1}{2r^f}}
\end{align*}

For $\gamma = 1$ we get that:
\begin{align*}
m_{\pm} = 2r^f \left( R_G - \frac{1}{2} \pm \left( 1-R_G \right) \right)
\end{align*}
Thus,
\begin{align*}
    & m_+ = r^f \\
    & m_- = r^f(4R_G - 3).
\end{align*}
Finally, we can conclude the values of $m$ that satisfies Inequality~\eqref{eq: welfare RG above RF} and Inequality~\ref{eq: welfare RF opt} are:
\begin{enumerate}
    \item $m < \frac{r^f}{r^g} \left( r^g(2-\gamma) - 2c \right)$
    \item $m \geq m_+$ or $m \leq m_-$.
\end{enumerate}

Observe that
\begin{align*}
    m_- < r^f\left(1-2\frac{c}{r^g}\right) < r^f
\end{align*}
Therefore, we can summarize that the profile $(R_F, 1-R_F)$ is the equilibrium only if $m \leq m_-$.
This concludes the proof of \Cref{lemma: equalibrium RF conditions}.
\end{proofof}

\begin{proofof}{obs: negative expressions Mrf}
Assume in contradiction that there exists $r^g > 0$ such that $r^f \frac{r^g - 2c}{r^g - 2r^f} > 0$ and $r^f(4R_G - 3) > 0$.

Starting with $r^f \frac{r^g - 2c}{r^g - 2r^f} > 0$, Since $r^f < 2r^f$ it must hold that $r^g < 2c$.

Next, observe that
\begin{align*}
r^f(4R_G - 3) = \frac{r^f}{r^g} \left( r^g - 4c \right) < \frac{r^f}{r^g} \left( 2c - 4c \right) < -2c \frac{r^f}{r^g} < 0.
\end{align*}

Which is a contradiction to our assumption.
\end{proofof}

\subsection{Proof of \Cref{cor:welfare}}

\begin{proof}

The equilibrium analysis reveals two possible equilibria. Let $\theta^{eq} = (\alpha^{eq}, x^{eq})$ denote the equilibrium profile. Therefore, our objective is of the following form:

\begin{align*}
    \alpha + \lambda x &= \begin{cases}
        R_G & \mbox{$\theta^{eq} = (R_G, 0)$} \\
        R_F + \lambda(1-R_F) & \mbox{$\theta^{eq} = (R_F, 1-R_F)$}
    \end{cases} \\
    &= \begin{cases}
        R_G & \mbox{$\theta^{eq} = (R_G, 0)$} \\
        R_F(1-\lambda) + \lambda & \mbox{$\theta^{eq} = (R_F, 1-R_F)$}
    \end{cases}.
\end{align*}

According to \Cref{lemma: equalibrium RF conditions}, the condition for $\theta^{eq} = (R_F, 1-R_F)$ to be an equilibrium is $m \leq m_- = r^f(4R_G - 3)$. 

Notice that if the profile $(R_F, 1-R_F)$ is the equilibrium, then we need to choose $m$ such that
\begin{align*}
    \max_{m} R_F(1-\lambda) + \lambda = \max_{m} R_F(1-\lambda)
\end{align*}
Therefore, we would like to pick the maximal $m$ if $\lambda < 1$ and the minimal $m$ if $\lambda > 1$.

We split our analysis into 4 cases: $\lambda > 1$, $\lambda = 1$, $0 <\lambda < 1$ and $\lambda = 0$.
\begin{enumerate}
    \item Case 1: $\lambda \geq 1$. From our conditions on $m$ and our condition for $R_F \in [0, 1]$ we get that $m_{min}$ is the minimal value in $[-r^f, m_-]$. $m_{min}$ is well defined only if $m_- \geq -r^f$, which implies $r^g \geq 2c$. In this case, the objective function under the profile $(R_F, 1-R_F)$ is given by
        \begin{align*}
            R_F(1-\lambda) + \lambda = \lambda > 1 > R_G.
        \end{align*}
        Since $R_F(1-\lambda) + \lambda > R_G$ we indeed get that $m = -r^f$ maximize our objective function.
        Therefore, to conclude, $m = m_{min}$ is optimal if $\lambda \geq 1$ and $r^g \geq 2c$. If $r^g < 2c$ then $(R_F, 1-R_F)$ is never the equilibrium profile. That is, $(R_G, 0)$ is the equilibrium profile - since $R_G$ does not depend on $m$ then any $m \in [-r^f, r^f]$ is optimal.

    \item Case 2: $\lambda < 1$: For the profile $(R_F, 1-R_F)$ to be the equilibrium and to maximize the objective function, we need to choose the maximal $m$ which satisfies our conditions. 
    
    The maximal $m = m_{max} = m_-$ is optimal only if it holds that $R_F(1-\lambda) + \lambda \geq R_G$. Plugging $m = m_{max}$ results in:
    \begin{align*}
        \frac{r^f + m_{max}}{2r^f}(1-\lambda) + \lambda - R_G &= \frac{r^f + r^f(4R_G - 3)}{2r^f}(1-\lambda) + \lambda - R_G \\
        &= (2R_G - 1)(1-\lambda) + \lambda - R_G \\
        &= (R_G - 1)(1-2\lambda)
    \end{align*}

    Therefore, we can further split the analysis of $\lambda$ into 3 cases:
    \begin{itemize}
        \item If $\lambda > 0.5$ and $r^g \geq 2c$ then $m = m_{max}$ is optimal.
        \item If $\lambda < 0.5$ then the profile $(R_G, 0)$ maximizes our objective function. Therefore, if $r^g \geq 2c$ then any $m \in (m_-, r^f]$ is optimal, and if $r^g < 2c$ then any $m \in [-r^f, r^f]$ is optimal.
        \item If $\lambda = 0.5$ then both profiles can maximize our objective function. If $r^g < 2c$ then any $m \in [-r^f, r^f]$ is optimal and if $r^g \geq 2c$ then any $m \in [m_-, r^f]$ is optimal.
    \end{itemize}   
\end{enumerate}

Therefore, to summarize, if $r^g < 2c$ then any $m \in [-r^f, r^f]$ is optimal.
If $r^g \geq 2c$ then 
\begin{align*}
    m = \begin{cases}
        -r^f & \mbox{$\lambda \geq 1$} \\
        r^f(4R_G - 3) & \mbox{$0.5 < \lambda < 1$} \\
        m' & \mbox{$\lambda = 0.5$} \\
        m'' & \mbox{$\lambda < 0.5$}
    \end{cases}
\end{align*}
where $m', m''$ can be any value such that $m' \in [r^f(4R_G - 3), r^f]$ and $m'' \in (r^f(4R_G - 3), r^f]$.

This concludes the proof of \Cref{cor:welfare}.
\end{proof}

\end{document}